\tikzset{
    dot/.style 2 args={fill, circle}
}
\newenvironment{changemargin}[2]{%
\list{}{\rightmargin#2\leftmargin#1
\parsep=0pt\topsep=0pt\partopsep=0pt}
\item[]}
{\endlist}
\newenvironment{indented}{\begin{changemargin}{1cm}{0cm}}{\end{changemargin}}
\newtheorem{theorem}{Theorem}
\newtheorem{proposition}[theorem]{Proposition}
\newtheorem{definition}[theorem]{Definition}
\newtheorem{example}[theorem]{Example}
\let\phi\varphi
\let\epsilon\varepsilon
\newcommand{\qedsymbol}{\hfill\ensuremath{\square}}
\newcommand{\calA}{\mathcal{A}}
\newcommand{\calC}{\mathcal{C}}
\newcommand{\calM}{\mathcal{M}}
\newcommand{\calR}{\mathcal{R}}
\newcommand{\calT}{\mathcal{T}}
\newcommand{\NP}{\ensuremath{\textsc{NP}}}
\newcommand{\SIGMA}[2]{\ensuremath{\Sigma_{\mathit{#1}}^{\mathit{#2}}}}
\newcommand{\constant}[1]{\mathit{#1}}
\newcommand{\variable}[1]{\mathit{#1}}
\newcommand{\variables}[1]{{\mathbf{#1}}}
\newcommand{\term}[1]{\mathit{#1}}
\newcommand{\terms}[1]{\mathbf{#1}}
\newcommand{\mods}[1]{\mathit{models}(#1)}
\newcommand{\answersets}[1]{\mathit{AS}(#1)}
\newcommand{\termt}{\term{t}}
\newcommand{\termst}{\terms{t}}
\newcommand{\varV}{\variable{V}}
\newcommand{\varW}{\variable{W}}
\newcommand{\varX}{\variable{X}}
\newcommand{\varY}{\variable{Y}}
\newcommand{\varZ}{\variable{Z}}
\newcommand{\varsX}{\variables{X}}
\newcommand{\varsY}{\variables{Y}}
\newcommand{\varsZ}{\variables{Z}}
\newcommand{\relation}[1]{{\mathit{#1}}}
\newcommand{\fullatom}[2]{{\relation{#1}(#2)}}
\newcommand{\varof}[1]{\mathit{var}(#1)}
\newcommand{\elitof}[1]{\mathit{elit}(#1)}
\newcommand{\eneg}{\mathbf{not}\,}
\newcommand{\body}[1]{{\mathit{B}(#1)}}
\newcommand{\nbody}[1]{{\mathit{B}^-(#1)}}
\newcommand{\pbody}[1]{{\mathit{B}^+(#1)}}
\newcommand{\head}[1]{{\mathit{H}(#1)}}
\newcommand{\linkselp}{\url{https://dbai.tuwien.ac.at/proj/selp}}
\newcommand{\linkbench}{\url{https://dbai.tuwien.ac.at/proj/selp}}
\title[Theory and Practice of Logic Programming]
      {\emph{selp}: A Single-Shot Epistemic Logic Program Solver}
\author[M.\ Bichler, M.\ Morak, and S.\ Woltran]
       {Manuel Bichler, 
	Michael Morak, and
	Stefan Woltran\\
	TU Wien, Vienna, Austria\\
	\{surname\}@dbai.tuwien.ac.at}
\begin{document}

\maketitle

\begin{abstract}
Epistemic Logic Programs (ELPs) are an extension of Answer Set Programming
(ASP) with epistemic operators that allow for a form of meta-reasoning, that
is, reasoning over multiple possible worlds. Existing ELP solving approaches
generally rely on making multiple calls to an ASP solver in order to evaluate
the ELP. However, in this paper, we show that there also exists a direct
translation from ELPs into non-ground ASP with bounded arity. The resulting
ASP program can thus be solved in a single shot. We then implement this
encoding method, using recently proposed techniques to handle large,
non-ground ASP rules, into the prototype ELP solving system ``selp'', which we
present in this paper. This solver exhibits competitive performance on a set
of ELP benchmark instances.
 \end{abstract}

\section{Introduction}\label{sec:introduction}

Epistemic Logic Programs (ELPs), as defined in \cite{ai:ShenE16}, are an
extension of the well-established formalism of Answer Set Programming (ASP). ASP
is a generic, fully declarative logic programming language that allows us to
encode problems in such a way that the resulting answers (called \emph{answer
sets}) directly correspond to solutions of the encoded problem
\cite{cacm:BrewkaET11}.  Negation in ASP is generally interpreted according to
the stable model semantics \cite{iclp:GelfondL88}, that is, as
\emph{negation-as-failure} or \emph{default negation}. The default negation
$\neg a$ of an atom $a$ is true if there is no justification for $a$ in the same
answer set, making it a ``local'' operator in the sense that it is defined
relative to the same answer set. ELPs, on the other hand, extend ASP with the
\emph{epistemic negation} operator $\eneg$ that allows for a form of
meta-reasoning, that is, reasoning over multiple answer sets.  Intuitively, an
epistemically negated atom $\eneg a$ expresses that $a$ cannot be \emph{proven}
true, that is, it is false in at least one answer set.  Thus, epistemic negation
is defined relative to a collection of answer sets, referred to as a \emph{world
view}. The main reasoning task for ELPs, checking that a world view exists, is
\SIGMA{P}{3}-complete~\cite{ai:ShenE16}.

Epistemic negation has long been recognized as a desired construct for ASP
\cite{aaai:Gelfond91,amai:Gelfond94}. In these works, Michael Gelfond introduced
the modal operators $\mathbf{K}$ (``known'' or ``provably true'') and
$\mathbf{M}$ (``possible'' or ``not provably false''), in order to address this
need. Given an atom $a$, $\mathbf{K}a$ and $\mathbf{M}a$ stand for $\neg \eneg
a$ and $\eneg \neg a$, respectively.

\begin{example}\label{ex:introduction}
  A classical example for the use of epistemic negation is the presumption of
  innocence rule
  $$\fullatom{innocent}{\varX} \gets \eneg \fullatom{guilty}{\varX},$$
  namely: a person is innocent if they cannot be proven guilty.
\end{example}

Renewed interest in recent years has revealed
several flaws in the original semantics, and several approaches (cf.\ e.g.\
\cite{lpnmr:Gelfond11,birthday:Truszczynski11,diss:Kahl14,ijcai:CerroHS15,ai:ShenE16})
aimed to refine them in such a way that unintended world views are
eliminated. In this work, we will settle on the semantics proposed in
\cite{ai:ShenE16}.
The flurry of new research also led to the development of ELP solving systems
\cite{logcom:KahlWBGZ15,ijcai:SonLKL17%
}. %
Such 
solvers %
employ readily available, highly efficient ASP systems like
\emph{clingo} \cite{book:GebserKKS12,iclp:GebserKKS14} and \emph{WASP}
\cite{lpnmr:AlvianoDFLR13}, especially making use of the former solver's
multi-shot solving functionality \cite{corr:GebserKKS17}.
However, these ELP solving systems %
rely on
ground ASP programs
when calling the ASP solver, which, for reasons rooted in complexity theory, generally requires 
multiple calls in order to check for world
view existence. The main aim of our paper is to present techniques and a system
for solving ELPs that is able to utilize an ASP solver in such a way that the
ELP can be solved in a single shot.

\paragraph{Contributions.} Our contributions in this paper are twofold:
\smallskip

\begin{itemize}
  \item 
    We propose a novel translation from ELPs to ASP programs using
    large non-ground ASP rules, such that the ELP can be solved by an ASP
    solving system in a single shot. This is done via a recently
    proposed encoding technique \cite{tplp:BichlerMW16} that uses
    large ASP rules to formulate complex checks. This technique builds on a
    result from 
    \cite{amai:EiterFFW07} 
    that states that evaluating
    non-ground ASP programs with bounded predicate arity is
    \SIGMA{P}{3}-complete, which matches the complexity of evaluating ELPs. Our
    proposed translation is therefore optimal from a complexity-theoretic point
    of view. From a practical point of view, such an encoding avoids multiple
    calls to the ASP solver. State-of-the-art systems use sophisticated
    heuristics and learning, and multiple calls %
	might result in a loss
	of 
	knowledge about the problem instance, which the solver has
    already learned.

  \item 
    We further discuss how our encoding needs to be constructed in order to be
    useful in practice. In particular, in current ASP systems, non-ground ASP
    programs first need to be \emph{grounded}, that is, all variables need to be
    replaced by all allowed combinations of constants. Since our encoding makes
    use of large non-ground rules, a naive grounding will often not terminate,
    since there may be hundreds or thousands of variables in a rule. However, as
    proposed in \cite{tplp:BichlerMW16}, we make use of the \emph{lpopt} rule
    decomposition tool \cite{lopstr:BichlerMW16} that splits such large rules 
    into smaller ones that are more easily grounded, by making use of the
    concept of \emph{treewidth} and \emph{tree decompositions} \cite{Bodlaender93}. To
    use this tool to its full potential, %
    the large rules we use in our
    encoding must be constructed carefully, in order for \emph{lpopt} to
    split them up optimally.

  \item 
	Finally, we present a prototype implementation of our ELP-to-ASP
    rewriting approach and combine it with the state-of-the-art ASP solving
    system \emph{clingo} \cite{iclp:GebserKKS14} in order to evaluate its
    performance. We compare our system against \emph{EP-ASP}~\cite{ijcai:SonLKL17}
    on different benchmarks found in the
    literature. Our system shows competitive performance on these benchmarks, in particular
	on instances with good structural properties.
\end{itemize}
\smallskip

The remainder of the paper is structured as follows: in
Section~\ref{sec:preliminaries} we introduce the formal background of ASP, ELPs,
and tree decompositions; Section~\ref{sec:encoding} states our reduction from
ELPs to ASP, including practical considerations and a discussion 
of related work; Section~\ref{sec:qbf} presents
how QBF formulas can be encoded as ELP programs; Section~\ref{sec:system}
introduces our ELP solver; Section~\ref{sec:experiments} presents our benchmark
results, making use of results from Section~\ref{sec:qbf}; and finally
Section~\ref{sec:conclusions} closes with some concluding remarks.

\smallskip This paper is an extended versions of
\cite{BichlerMW18,BichlerMW18a}.  Additional material includes a full
correctness proof for our reduction in Section~\ref{sec:encoding} and a
formalized and detailed description of the adaptations needed to make the
reduction workable in practice. Further, Section~\ref{sec:qbf} describes in
detail how our QBF benchmarks, used in Section~\ref{sec:experiments}, are
constructed.
\section{Preliminaries}\label{sec:preliminaries}

\paragraph{Answer Set Programming (ASP).} A \emph{ground logic program} (also
called answer set program, ASP program, or simply program) is a pair $\Pi =
(\calA,\calR)$, where $\calA$ is a set of propositional (i.e.\ ground) atoms and
$\calR$ is a set of rules of the form
\begin{equation}\label{eq:rule}
  a_1\vee \cdots \vee a_l \leftarrow a_{l+1}, \ldots, a_m, \neg a_{m+1}, \ldots,
  \neg a_n;
\end{equation}
where the comma symbol stands for conjunction, $n \geq m \geq l \geq 0$ and $a_i
\in \calA$ for all $1 \leq i \leq n$. Each rule $r \in \calR$ of
form~(\ref{eq:rule}) consists of a head $\head{r} = \{ a_1,\ldots,a_l \}$ and a
body given by $\pbody{r} = \{a_{l+1},\ldots,a_m \}$ and $\nbody{r} =
\{a_{m+1},\ldots,a_n \}$. A \emph{literal} $\ell$ is either an atom $a$ or its
(default) negation $\neg a$. A literal $\ell$ is true in a set of atoms $I
\subseteq \calA$ if $\ell = a$ and $a \in I$, or $\ell = \neg a$ and $a \not\in
I$; otherwise $\ell$ is false in $I$.  A set $M \subseteq \calA$ is a called a
\emph{model} of $r$ if $\pbody{r} \subseteq M$ together with $\nbody{r} \cap M =
\emptyset$ implies that $\head{r} \cap M \neq \emptyset$. We denote the set of
models of $r$ by $\mods{r}$ and the models of a logic program $\Pi=
(\calA,\calR)$ are given by $\mods{\Pi} = \bigcap_{r \in \calR} \mods{r}$. 
The GL-reduct $\Pi^I$ of a ground logic program $\Pi$ with respect to a set of atoms $I
\subseteq \calA$ is the program $\Pi^I = \left(\calA,\left\{\head{r}
\leftarrow \pbody{r} \mid r \in \calR, \nbody{r} \cap I =
\emptyset\right\}\right)$. %

\begin{definition}\label{def:answerset}
\cite{iclp:GelfondL88,ngc:GelfondL91}
  $M \subseteq \calA$ is an \emph{answer set} of a program $\Pi$ if (1) $M
  \in \mods{\Pi}$ and (2) there is no subset $N \subset M$ such that $N \in
  \mods{\Pi^M}$.
\end{definition}

The set of answer sets of a %
program $\Pi$ is denoted $\answersets{\Pi}$.
The \emph{consistency problem} of ASP (decide whether, %
given
$\Pi$, $\answersets{\Pi}\neq\emptyset$) is
\SIGMA{P}{2}-complete~\cite{amai:EiterG95}. 

General \emph{non-ground logic programs} differ from ground logic programs in
that variables may occur in rules. Such rules are $\forall$-quantified
first-order implications of the form $H_1\vee\dots\vee H_k\leftarrow
P_1,\dots,P_n,\neg N_1,\dots,\neg N_m$ where $H_i$, $P_i$ and $N_i$ are
(non-ground) atoms. A \emph{non-ground atom} $A$ is of the form
$\fullatom{p}{\termst}$ and consists of a predicate name $\relation{p}$, as well
as a sequence of \emph{terms} $\termst$, where each term $\termt \in \termst$ is
either a \emph{variable} or a \emph{constant} from a domain $\Delta$, with
$|\termst|$ being the arity of $\relation{p}$. Let $\varof{A}$ denote the set of
variables $\varsX$ in a non-ground atom $A$. This notation naturally extends to
sets. We will denote variables by capital letters, constants and predicates by
lower-case words. A non-ground rule can be seen as an abbreviation for all
possible instantiations of the variables with domain elements from $\Delta$.
This step is usually explicitly performed by a \emph{grounder} that transforms a
(non-ground) %
logic program into a set of ground rules of the
form~(\ref{eq:rule}). Note that, in general, such a ground %
program may be
exponential in the size of the non-ground %
program. For non-ground programs of bounded arity, the consistency problem is
\SIGMA{P}{3}-complete \cite{amai:EiterFFW07}.

\paragraph*{Epistemic Logic Programs.} A \emph{ground epistemic logic program
(ELP)} is a pair $\Pi = (\calA, \calR)$, where $\calA$ is a set or propositional
atoms and $\calR$ is a set of rules of the following form:
\begin{equation*}
   a_1\vee \cdots \vee a_k \leftarrow \ell_1, \ldots, \ell_m, \xi_1, \ldots,
  \xi_j, \neg \xi_{j + 1}, \ldots, \neg \xi_{n},
\end{equation*}
where each $a_i$ is an atom, each $\ell_i$ is a \emph{literal}, and each $\xi_i$
is an \emph{epistemic literal}, that is, a formula $\mathbf{not}\,\ell$, where
$\mathbf{not}$ is the epistemic negation operator, and $\ell$ is a literal.
W.l.o.g.\ we assume that no atom appears twice in a rule\footnote{This can be
achieved by introducing auxiliary atoms whenever an atom appears twice in a
rule, and add two rules that ensure that the original and auxiliary atom must be
equivalent.}. Let $\elitof{r}$ denote the set of all epistemic literals
occurring in a rule $r \in \calR$. This notation naturally extends to programs.
Let $\head{r} = \{ a_1, \ldots, a_k \}$. Let $\body{r} = \{ \ell_1, \ldots,
\ell_m, \xi_1, \ldots, \xi_j, \neg \xi_{j+1}, \ldots, \neg \xi_{n} \}$, that is,
the set of elements appearing in the rule body.

In order to define the main reasoning tasks for ELPs, we recall the notion of
the epistemic reduct \cite{ai:ShenE16}. Let $\Phi \subseteq \elitof{\Pi}$
(called a \emph{guess}). The \emph{epistemic reduct} $\Pi^\Phi$ of the program
$\Pi = (\calA, \calR)$ w.r.t.\ $\Phi$ consists of the rules $\{ r^\neg \mid r
\in \calR \}$, where $r^\neg$ is defined as the rule $r$ with all epistemic
literals $\eneg \ell$ in $\Phi$ (resp.\ in $\elitof{\Pi} \setminus \Phi$)
replaced by $\top$ (resp.\ $\neg \ell$). Note that $\Pi^\Phi$ is a logic program
without epistemic negation\footnote{We interpret double negation according to
\cite{ai:FaberPL11}.}. This leads to the following, central definition.

\begin{definition}\label{def:worldview}
  Let $\Phi$ be a guess. The set $\calM = \answersets{\Pi^\Phi}$ is called a
  \emph{candidate world view} of $\Pi$ iff
  \begin{enumerate}
	\item\label{def:worldview:1} $\calM \neq \emptyset$,
	  
	\item\label{def:worldview:2} for each epistemic literal
	  $\mathbf{not}\,\ell \in \Phi$, there exists an answer set $M \in
	  \calM$ wherein $\ell$ is false, and 

	\item\label{def:worldview:3} for each epistemic literal
	  $\mathbf{not}\,\ell \in \elitof{\Pi} \setminus \Phi$, it holds that
	  $\ell$ is true in each answer set $M \in \calM$.
  \end{enumerate}
\end{definition}

\begin{example}\label{ex:running1}
  Let $\Pi$ be the following ELP, with $\calR = \{ r_1, r_2 \}$:
  \begin{align*}
	r_1: p \gets \eneg q\\
	r_2: q \gets \eneg p
  \end{align*}
  ELP $\Pi$ has two candidate world views: (1) $\Phi = \{ \eneg q \}$
  with $\answersets{\Pi^\Phi} \,{=}\, \{ \{ p \} \}$; %
(2) $\Phi = \{ \eneg p \}$
  with $\answersets{\Pi^\Phi}\, {=}\, \{ \{ q \} \}$.\qedsymbol
\end{example}

The main reasoning task we treat in this paper is the \emph{world view existence
problem} (or \emph{ELP consistency}), that is, given an ELP $\Pi$, decide
whether a candidate world view exists. This problem is known to be
\SIGMA{3}{P}-complete \cite{ai:ShenE16}.

\paragraph*{Tree Decompositions.} A \emph{tree decomposition} of a graph $G =
(V,E)$ is a pair $\calT = (T, \chi)$, where $T$ is a rooted tree and $\chi$ is a
labelling function over nodes $t$, with $\chi(t) \subseteq V$, such that the
following holds: (i) for each $v \in V$ there is a node $t$ in $T$ such that $v
\in \chi(t)$; (ii) for each $\{v,w\} \in E$ there is a node $t$ in $T$ such that
$\{v, w\} \subseteq \chi(t)$; and (iii) for all nodes $r$, $s$, and $t$ in $T$,
where $s$ lies on the path from $r$ to $t$, $\chi(r) \cap \chi(t) \subseteq
\chi(s)$.  The \emph{width} of a tree decomposition $\calT$ is defined as the
maximum cardinality of $\chi(t)$ minus one, over all nodes $t$ of $\calT$. The
\emph{treewidth} of a graph $G$ is the minimum width over all tree
decompositions of~$G$. Trees have treewidth 1, cliques of size $k$ have
treewidth $k$. Finding a tree decomposition of minimal width is \NP-hard in
general.
\section{Single-Shot ELP Solving}\label{sec:encoding}

In this section, we provide our novel translation for solving ELPs via a single
call to an ASP solving system. The goal is to transform a given ELP $\Pi$ to a
non-ground ASP program $\Pi'$ with predicates of bounded arity, such that $\Pi$
is consistent (i.e.\ it has a candidate world view) iff $\Pi'$ has at least one
answer set. A standard ASP solver can then decide the consistency problem for
the ELP $\Pi$ in a single call, by solving $\Pi'$.

\subsection{Reducing ELPs to ASP Programs}\label{sec:reduction}

The reduction is based on an encoding technique proposed in
\cite{tplp:BichlerMW16}, which uses large, non-ground rules. Given an ELP $\Pi$,
the ASP program $\Pi'$ will roughly be constructed as follows.  $\Pi'$ contains
a guess part that chooses a set of epistemic literals from $\elitof{\Pi}$,
representing a guess $\Phi$ for $\Pi$.  Then, the check part verifies that, for
$\Phi$, a candidate world exists. In all, the ASP program $\Pi'$ consists of
five parts: $$\Pi' = \Pi'_\mathit{facts} \cup \Pi'_\mathit{guess} \cup
\Pi'_{\mathit{check}_{\ref{def:worldview:1}}} \cup
\Pi'_{\mathit{check}_{\ref{def:worldview:2}}} \cup
\Pi'_{\mathit{check}_{\ref{def:worldview:3}}},$$ where the sub-program
$\Pi'_\mathit{facts}$ is a set of facts representing the ELP $\Pi$, and
$\Pi'_{\mathit{check}_i}$ represents the part of the program that checks
Condition~$i$ of Definition~\ref{def:worldview}. We now proceed to the
construction of the program $\Pi'$. Let $\Pi = (\calA, \calR)$ be the ELP to
reduce from. To ease notation, let $\calA = \{ a_1, \ldots, a_n \}$.

\paragraph{The set of facts $\Pi'_\mathit{facts}$.} %
$\Pi'_\mathit{facts}$ %
represents basic knowledge
about the ELP $\Pi$, plus some auxiliary facts, and is given as:
\begin{itemize}
  \item $\fullatom{atom}{a}$, for each atom $a \in \calA$;

  \item $\fullatom{elit}{\ell}$, for each epistemic literal $\mathbf{not}\,\ell
    \in \elitof{\Pi}$\footnote{Note that we use the literal $\ell$ as an ASP
    constant.};

  \item $\fullatom{leq}{0, 0}$, $\fullatom{leq}{0, 1}$, and $\fullatom{leq}{1, 1}$,
    representing the less or equal relation for Boolean values; and

  \item $\fullatom{or}{0, 0, 0}$, $\fullatom{or}{0, 1, 1}$, $\fullatom{or}{1, 0,
    1}$, and $\fullatom{or}{1, 1, 1}$, representing the Boolean relation \emph{or}.
\end{itemize}

\paragraph{Sub-Program $\Pi'_\mathit{guess}$.} This part of the program
consists of a single, non-ground rule that guesses a subset of the epistemic
literals (stored in predicate $\relation{g}$) as follows:
$$\fullatom{g}{\variable{L}, 1} \vee \fullatom{g}{\variable{L}, 0} \gets
\fullatom{elit}{\variable{L}}.$$

\paragraph{Shorthands.} Before defining the three check parts of the program, we
will introduce some useful shorthands which will be used at several occasions. 
To this end, we use a context identifier $\calC$. We first define the following:
$$H_\mathit{val}^\calC(\variable{A}) \equiv \fullatom{v_\calC}{\variable{A}, 1}
\vee \fullatom{v_\calC}{\variable{A}, 0},$$ that is,
$H_\mathit{val}^\calC(\variable{A})$ guesses a truth assignment for some
variable $\variable{A}$ and stores it in relation $\relation{v_\calC}$. We will
often use variables $\varsX = \{ \varX_1, \ldots, \varX_n \}$ or $\varsY = \{
\varY_1, \ldots, \varY_n \}$ to represent a subset $M$ of $\calA$, where
assigning $\varX_i$  to $1$ characterizes $a_i \in M$, and $\varX_i = 0$
otherwise. Let $$B_\mathit{val}^\calC(\varsX) \equiv \bigwedge_{a_i \in \calA}
\fullatom{v_\calC}{a_i, \varX_i},$$ that is, $B_\mathit{val}^\calC(\varsX)$
extracts the truth assignment from relation $\relation{v_\calC}$ into the
variables $\varsX$ as described above. Finally, for some rule $r$ in $\Pi$, we
define a formula
$B_\mathit{sat}^r(\varsX, \varsY, \variable{S})$ that checks whether
the rule $r$ is satisfied in the epistemic reduct $\Pi^\Phi$ w.r.t.\ the guess
$\Phi$ encoded in the relation $\relation{g}$, when the negative body (resp.\ 
positive body and head) is evaluated over the set of atoms encoded by $\varsX$ (resp.\
$\varsY$). If the rule is satisfied, $B_\mathit{sat}^r(\varsX, \varsY, 1)$
should hold, and $B_\mathit{sat}^r(\varsX, \varsY, 0)$ otherwise.  This is done
as follows. Let $r$ contain the atoms $\{ a_{i_1}, \ldots,a_{i_m} \}$
(recall that no atom appears twice in a rule), where $i_1, \ldots, i_m \in \{ 1,
\ldots, n \}$. For ease of notation, we will use a four-ary $\relation{or}$
relation, which can easily be split into two of our three-ary $\relation{or}$
atoms using a helper variable $\variable{T}$:
$$\fullatom{or}{\varW, \varX, \varY, \varZ} %
\gets
\fullatom{or}{\varW, \varX,
\variable{T}}, \fullatom{or}{\variable{T}, \varY, \varZ}.$$ The following is the
central building block of our reduction:
 
\newcommand{\smin}{\scalebox{0.65}[1.0]{\( - \)}}
\begin{multline*}
  B_\mathit{sat}^r(\varsX, \varsY, \variable{R}_m) \equiv \variable{R}_0
  = 0, \bigwedge_{a_{i_j} \in \head{r}} \fullatom{or}{\variable{R}_{j-1},
  \varY_{i_j}, \variable{R}_j},\\
  \bigwedge_{a_{i_j} \in \body{r}} \fullatom{or}{\variable{R}_{j-1},
  1\smin\varY_{i_j}, \variable{R}_j}, \bigwedge_{\neg a_{i_j} \in \body{r}}
  \fullatom{or}{\variable{R}_{j-1}, \varX_{i_j}, \variable{R}_j},\\
  \bigwedge_{\eneg a_{i_j} \in \body{r}} \fullatom{g}{a_{i_j}, \variable{N}_j},
  \fullatom{or}{\variable{N}_j, 1\smin\varX_{i_j}, \variable{T}_j},
  \fullatom{or}{\variable{R}_{j-1}, 1\smin\variable{T}_j, \variable{R}_j},\\
  \bigwedge_{\eneg \neg a_{i_j} \in \body{r}} \fullatom{g}{\neg a_{i_j},
  \variable{N}_j}, \fullatom{or}{\variable{N}_j, \varY_{i_j}, \variable{T}_j},
  \fullatom{or}{\variable{R}_{j-1}, 1\smin\variable{T}_j, \variable{R}_j},\\
  \bigwedge_{\neg \eneg a_{i_j} \in \body{r}} \fullatom{g}{a_{i_j},
  \variable{N}_j}, \fullatom{or}{\variable{R}_{j-1}, \variable{N}_j,
  1\smin\varY_{i_j}, \variable{R}_j},\\
  \bigwedge_{\neg \eneg \neg a_{i_j} \in \body{r}} \fullatom{g}{\neg a_{i_j},
  \variable{N}_j}, \fullatom{or}{\variable{R}_{j-1}, \variable{N}_j,
  \varX_{i_j}, \variable{R}_j}.
\end{multline*}

For a rule $r$, each big conjunction in the above formula encodes a reason for
$r$ to be satisfied. For example, the fifth line encodes the fact that rule $r$
is true if the disjunct $\neg \eneg a_{i_j}$ is not satisfied, that is, if the
epistemic literal $\eneg a_{i_j}$ is part of the guess $\Phi$, or the atom
$a_{i_j}$ is false (represented by $1\smin\varY_{i_j}$). Each disjunct of rule
$r$ is evaluated in this way, and the results are connected via the
$\relation{or}$ relation (with the result of the first $i$ disjuncts stored in
variable $\variable{R}_i$). Therefore, $\variable{R}_m$ will be $1$ if $r$ is
satisfied, and $0$ otherwise, as desired (recall that $r$ has $m$ disjuncts).
The following example illustrates how this shorthand is constructed for a
concrete input program.

\begin{example}\label{ex:running2}
  Recall program $\Pi = (\calA, \calR)$ from Example~\ref{ex:running1}. Let
  $\calA = \{ a_1, a_2 \}$, where $a_1 = p$ and $a_2 = q$.  Let rule $r_2 \in
  \calR$ contain the atoms $\{ a_{i_1}, a_{i_2} \}$, where $i_1 = 2$ and $i_2 =
  1$. We give the core construct, $B_\mathit{sat}^{r}(\cdot, \cdot,
  \cdot)$ for rule $r_2$:
  \begin{multline*}
    B_\mathit{sat}^{r_2}(\varX_1, \varX_2, \varY_1, \varY_2, R_2) \equiv
    \fullatom{or}{0, \varY_2, \variable{R}_1}, \fullatom{g}{\constant{p},
    \variable{N}_2}, \fullatom{or}{\variable{N}_2, 1\smin\varX_1,
    \variable{T}_2}, \fullatom{or}{\variable{R}_1, 1\smin\variable{T}_2,
    \variable{R}_2}.
  \end{multline*}\qedsymbol
\end{example}

Finally, we define $B_\mathit{ss}(\varsX, \varsY)$, which
makes sure that the variables $\varsY$ identify a strict subset of the atoms
identified by $\varsX$. Let $B_\mathit{ss}(\varsX, \varsY) \equiv$ $$
\variable{N}_0 = 0, \variable{N}_n = 1, \bigwedge_{a_i \in \calA}
\fullatom{leq}{\varY_i, \varX_i}, \fullatom{or}{\variable{N}_{i-1}, \varX_i
\smin \varY_i, \variable{N}_i}.$$ We can now proceed with the remainder of our
reduction.

\paragraph{Sub-Program $\Pi'_{\mathit{check}_{\ref{def:worldview:1}}}$.} This part
of the program needs to check that, given the guess $\Phi$ made in
$\Pi'_\mathit{guess}$, there exists at least one answer set of the epistemic
reduct $\Pi^\Phi$, as per Definition~\ref{def:worldview}(\ref{def:worldview:1}).
Therefore, according to Definition~\ref{def:answerset}, we need to find a set $M
\subseteq \calA$, such that (1) $M$ is a model of $\Pi^\Phi$, and (2) there is
no proper subset of $M$ that is a model of the GL-reduct $(\Pi^\Phi)^M$.
$\Pi'_{\mathit{check}_{\ref{def:worldview:1}}}$ contains the following rules:

\begin{itemize}
  \item $H_\mathit{val}^{\constant{check}_{\ref{def:worldview:1}}}(\variable{A})
	\gets \fullatom{atom}{\variable{A}}$;
	
      \item $\bot \gets B_\mathit{val}^{\constant{check}_{\ref{def:worldview:1}}}(
	\varsX), B_\mathit{sat}^r(\varsX, \varsX, 0)$, for each %
	$r \in \calR$;
	and

      \item $\bot \gets B_\mathit{red}^{\constant{check}_{\ref{def:worldview:1}}}$.
\end{itemize}
The first rule guesses a truth assignment for all atoms. The second rule
verifies that there is no rule in $\Pi^\Phi$ that is violated by the candidate
answer set $M$, represented by the variables $\varsX$, guessed by the first
rule. $B_\mathit{red}^\calC$ checks whether a subset of $M$ is a model of the
GL-reduct $(\Pi^\Phi)^M$. To this end, let $$B_\mathit{red}^\calC \equiv
B_\mathit{val}^\calC(\varsX), B_\mathit{ss}(\varsX, \varsY), \bigwedge_{r \in
\calR} B_\mathit{sat}^r(\varsX, \varsY, 1).$$ 

The last big conjunction in $B_\mathit{red}^\calC$ makes sure that the subset $N
\subset M$ identified by the variables $\varsY$ is indeed a model of every rule
in the GL-reduct $(\Pi^\Phi)^M$. This completes
$\Pi'_{\mathit{check}_{\ref{def:worldview:1}}}$.

\paragraph{Sub-Program $\Pi'_{\mathit{check}_{\ref{def:worldview:2}}}$.} This part
needs to check that, for every epistemic literal $\eneg \ell \in \Phi$, the
epistemic reduct $\Pi^\Phi$ has some answer set wherein $\ell$ is
false. $\Pi'_{\mathit{check}_{\ref{def:worldview:2}}}$ contains the following
rules and facts, for each epistemic literal $\eneg \ell \in \elitof{\Pi}$ (used
as the context $\calC$ so guesses are independent):
\begin{itemize}
  \item $H_\mathit{val}^\ell(\variable{A}) \gets \fullatom{atom}{\variable{A}},
    \fullatom{g}{\ell, 1}$;
  
  \item $\fullatom{v_\ell}{a, \eta}$, where $\eta = 1$ if $\ell =
	\neg a$, or $\eta = 0$ if $\ell = a$;

  \item $\bot \gets B_\mathit{val}^\ell(\varsX), B_\mathit{sat}^r(\varsX,
	\varsX, 0)$, for each $r \in \calR$; and

  \item $\bot \gets B_\mathit{red}^\ell$.

\end{itemize}

These rules guess, for each epistemic literal $\eneg \ell \in \Phi$, a candidate
answer set $M$ wherein $\ell$ is false, and then verify that $M$ is indeed an
answer set, using the same technique as in
$\Pi'_{\mathit{check}_{\ref{def:worldview:1}}}$. This ensures
Condition~\ref{def:worldview:2} of Definition~\ref{def:worldview}.

\paragraph{Sub-Program $\Pi'_{\mathit{check}_{\ref{def:worldview:3}}}$.} Finally,
this part needs to check that, for every epistemic literal $\eneg \ell \in
\elitof{\Pi} \setminus \Phi$, every answer set of $\Pi^\Phi$ satisfies $\ell$.
The construction makes use of the technique of
saturation \cite{amai:EiterG95}:
\begin{itemize}
  \item $H_\mathit{val}^{\mathit{check}_{\ref{def:worldview:3}}}(\variable{A})
    \gets \fullatom{atom}{\variable{A}}$;

  \item $\fullatom{v_{\mathit{check}_{\ref{def:worldview:3}}}}{\variable{A}, 0} \gets \mathit{sat},
	\fullatom{atom}{\variable{A}}$;

      \item $\fullatom{v_{\mathit{check}_{\ref{def:worldview:3}}}}{\variable{A}, 1}
	\gets \mathit{sat}, \fullatom{atom}{\variable{A}}$; and

  \item $\bot \gets \neg\mathit{sat}$.
\end{itemize}

This setup checks that, for every candidate answer set $M$ guessed in the first
rule, the atom $\mathit{sat}$ is derived. Since we are only interested in answer
sets, we first check that $M$ is indeed one, using the following rules,
similarly to $\Pi'_{\mathit{check}_{\ref{def:worldview:1}}}$:
\begin{itemize}
  \item $\mathit{sat} \gets
    B_\mathit{val}^{\mathit{check}_{\ref{def:worldview:3}}}(\varsX),
	B_\mathit{sat}^r(\varsX, \varsX, 0)$, for each %
	$r \in \calR$; and

  \item $\mathit{sat} \gets
    B_\mathit{red}^{\mathit{check}_{\ref{def:worldview:3}}}$.
\end{itemize}

It now remains to check that in each answer set $M$ (that is, where
$\mathit{sat}$ has not been derived yet) all epistemic literals $\eneg \ell$ are
either in $\Phi$, or otherwise $\ell$ is true in $M$. This is done by adding the
following rule to $\Pi'_{\mathit{check}_{\ref{def:worldview:3}}}$:
 
\begin{multline*}
  \mathit{sat} \gets \bigwedge_{\eneg a \in \elitof{\Pi}} \fullatom{g}{a,
    \variable{N}_a}, \fullatom{v_{\mathit{check}_{\ref{def:worldview:3}}}}{a,
  \varX_a}, \fullatom{or}{\variable{N}_a, \varX_a, 1},\\
  \bigwedge_{\eneg \neg a \in \elitof{\Pi}} \fullatom{g}{\neg a,
    \variable{N}^\neg_a}, \fullatom{v_{\mathit{check}_{\ref{def:worldview:3}}}}{a,
      \varX^\neg_a},
  \fullatom{or}{\variable{N}^\neg_a, 1\smin\varX^\neg_a, 1}.
\end{multline*}

This completes the reduction. We will now show that this reduction indeed
accomplishes our goals. The correctness of our reduction can be intuitively seen
from the observation that each of the three check parts of the constructed ASP
program $\Pi'$ ensures precisely one of the three conditions that define a
candidate world view. Each answer set $A$ of $\Pi'$ is a witness for the fact
that a guess $\Phi \subseteq \elitof{\Pi}$ encoded in $A$ indeed gives rise to a
candidate world view. The next theorem formally states that our reduction is
correct.

\begin{theorem}
  Let $\Pi = (\calA, \calR)$ be an ELP and let $\Pi'$ be the ASP program
  obtained from $\Pi$ via the above reduction. Then, $\Pi$ has a candidate world
  view if and only if $\Pi'$ has an answer set.
\end{theorem}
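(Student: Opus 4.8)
The plan is to first establish the correctness of the shorthand building blocks as auxiliary lemmas, and then argue that each of the three check parts of $\Pi'$ enforces exactly the corresponding condition of Definition~\ref{def:worldview}, relative to the guess $\Phi$ recorded in the predicate $\relation{g}$. I would prove a lemma stating that, for any truth assignments $X, Y \subseteq \calA$ encoded in $\relation{v_\calC}$ and any guess $\Phi$ encoded in $\relation{g}$, the shorthand $B_\mathit{sat}^r(\varsX, \varsY, \variable{R}_m)$ derives $\variable{R}_m = 1$ precisely when the rule $r^\neg$ of the epistemic reduct $\Pi^\Phi$ is satisfied, with negative literals evaluated over $X$ and positive body atoms and head atoms evaluated over $Y$, and $\variable{R}_m = 0$ otherwise. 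This is shown by a case analysis over the head atoms and the six kinds of body elements ($a_{i_j}$, $\neg a_{i_j}$, $\eneg a_{i_j}$, $\eneg\neg a_{i_j}$, $\neg\eneg a_{i_j}$, $\neg\eneg\neg a_{i_j}$): in each case one checks that the corresponding conjunct contributes a $1$ to the running $\relation{or}$-chain exactly when the associated disjunct of $r^\neg$ is satisfied, using the guess bit $\variable{N}_j$ read from $\relation{g}$, which equals $1$ iff the relevant $\eneg\ell \in \Phi$, and the convention that double negation collapses per \cite{ai:FaberPL11}. Analogous but simpler arguments establish that $B_\mathit{val}^\calC$ and $H_\mathit{val}^\calC$ encode a truth assignment over $\relation{v_\calC}$, that $B_\mathit{ss}(\varsX, \varsY)$ holds iff $Y \subsetneq X$, and hence that $B_\mathit{red}^\calC$ is satisfiable iff some proper subset of the assignment in $\varsX$ is a model of the GL-reduct $(\Pi^\Phi)^M$.

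With these lemmas, I would show that, fixing the guess $\Phi$, the constraints of $\Pi'_{\mathit{check}_{\ref{def:worldview:1}}}$ are jointly satisfiable by some guess of $\relation{v_{\mathit{check}_1}}$ iff $\answersets{\Pi^\Phi} \neq \emptyset$, matching Definition~\ref{def:worldview}(\ref{def:worldview:1}); that for each $\eneg\ell \in \Phi$, the constraints of $\Pi'_{\mathit{check}_{\ref{def:worldview:2}}}$ (guarded by $\fullatom{g}{\ell,1}$ and seeded with $\ell$ false) are satisfiable iff $\Pi^\Phi$ has an answer set falsifying $\ell$, matching Condition~\ref{def:worldview:2}; and that the saturation gadget $\Pi'_{\mathit{check}_{\ref{def:worldview:3}}}$ forces $\mathit{sat}$ to be derivable for every guess of $\relation{v_{\mathit{check}_3}}$ iff every answer set of $\Pi^\Phi$ satisfies all $\eneg\ell \in \elitof{\Pi} \setminus \Phi$, matching Condition~\ref{def:worldview:3}. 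Since the three check parts use pairwise distinct context predicates $\relation{v_\calC}$, their guesses are independent and can be combined into a single answer set of $\Pi'$.

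The two directions of the theorem then follow. For the forward direction, given a candidate world view witnessed by $\Phi$, I assemble an answer set $A$ of $\Pi'$ by setting $\relation{g}$ according to $\Phi$, choosing witness assignments for $\Pi'_{\mathit{check}_{\ref{def:worldview:1}}}$ and each instance of $\Pi'_{\mathit{check}_{\ref{def:worldview:2}}}$ from $\calM = \answersets{\Pi^\Phi}$, and saturating $\relation{v_{\mathit{check}_3}}$; I then verify that $A$ is a minimal model of the GL-reduct $\Pi'^A$. For the converse, given an answer set $A$ of $\Pi'$, I read off $\Phi$ from $\relation{g}$ and invoke the three claims above to verify all conditions of Definition~\ref{def:worldview} for $\Phi$, so that $\calM = \answersets{\Pi^\Phi}$ is a candidate world view.

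The main obstacle will be the correctness of the saturation encoding in $\Pi'_{\mathit{check}_{\ref{def:worldview:3}}}$ and its interaction with the stable-model minimality of $\Pi'$ as a whole. The delicate point is to show that in any answer set the rules $\fullatom{v_{\mathit{check}_3}}{\variable{A},0} \gets \mathit{sat}, \ldots$ and $\fullatom{v_{\mathit{check}_3}}{\variable{A},1} \gets \mathit{sat}, \ldots$ force $\relation{v_{\mathit{check}_3}}$ to be fully saturated, so that, by the minimality requirement of Definition~\ref{def:answerset}, $\mathit{sat}$ is present exactly when it is derivable under \emph{every} guess of the candidate set $M$; this is what turns the existential disjunctive guess into the universal check required by Condition~\ref{def:worldview:3}, following the standard saturation technique of \cite{amai:EiterG95}. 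A secondary technical point is to confirm that the existential witnesses of $\Pi'_{\mathit{check}_{\ref{def:worldview:1}}}$ and $\Pi'_{\mathit{check}_{\ref{def:worldview:2}}}$ do not spuriously block saturation: because they live in disjoint predicates and are only pruned by integrity constraints of the form $\bot \gets \cdots$, they contribute nothing extra to the minimal models over which the saturation argument reasons.
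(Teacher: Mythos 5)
Your proposal is correct and follows essentially the same route as the paper's own proof: fixing the guess $\Phi$ encoded in $\relation{g}$, matching each of the three check sub-programs to the corresponding condition of Definition~\ref{def:worldview}, constructing the witnessing answer set in the forward direction, and handling $\Pi'_{\mathit{check}_{\ref{def:worldview:3}}}$ via the saturation/GL-reduct minimality argument that you correctly single out as the delicate step. In fact, your plan is somewhat more explicit than the paper's proof, which leaves the correctness of the shorthands $B_\mathit{sat}^r$, $B_\mathit{ss}$, and $B_\mathit{red}$ to "by construction" and only sketches the converse direction, whereas you propose to isolate these as auxiliary lemmas and to verify explicitly that the disjoint check predicates cannot interfere with saturation.
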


\begin{proof}
  We will begin with the ``if'' direction. To this end, assume that there is a
  guess $\Phi \subseteq \elitof{\Pi}$ for $\Pi$ that gives rise to a candidate
  world view $\calM = \answersets{\Pi^\Phi}$. We will show that $\Pi'$ has an
  answer set $M$. Clearly, $M$ contains all the facts from
  $\Pi'_\mathit{facts}$. Furthermore, let $M$ contain the fact
  $\fullatom{g}{\ell, 1}$ for each epistemic literal $\eneg \ell \in \Phi$ and the fact
  $\fullatom{g}{\ell, 0}$ for each epistemic literal $\eneg \ell \in \elitof{\Pi}
  \setminus \Phi$. This clearly satisfies sub-program $\Pi'_\mathit{guess}$.

  Now, let $M' \in \calM$ be any answer set of $\Pi^\Phi$ (such an answer set
  exists, since, by assumption, $\calM$ is a candidate world view for $\Phi$ and
  by Definition~\ref{def:worldview}, $\calM$ is non-empty). Let $M$ contain the
  fact $\fullatom{v_{\mathit{check}_{\ref{def:worldview:1}}}}{a, 1}$ for each $a
  \in M'$ and the fact $\fullatom{v_{\mathit{check}_{\ref{def:worldview:1}}}}{a,
  0}$ for each $a \in \calA \setminus M'$. This satisfies sub-program
  $\Pi'_{\mathit{check}_{\ref{def:worldview:1}}}$ as follows. Clearly, $M$
  satisfies the first line of the sub-program. Since the atoms with relation
  $\relation{v_{\mathit{check}_{\ref{def:worldview:1}}}}$ encode precisely the
  answer set $M'$ of $\Pi^\Phi$, and since $M'$ is a model of $\Pi^\Phi$, also
  the second line of the sub-program is satisfied, which, by construction,
  checks that the assignment encoded in relation
  $\relation{v_{\mathit{check}_{\ref{def:worldview:1}}}}$ satisfies all the rules
  of $\Pi^\Phi$. Finally, the third line, by construction, checks that the same
  assignment is also minimal w.r.t.\ the GL-reduct $[\Pi^\Phi]^{M'}$. Since $M'$
  is an answer set of $\Pi^\Phi$, also this line of the sub-program
  $\Pi'_{\mathit{check}_{\ref{def:worldview:1}}}$ is satisfied.

  The argument for satisfaction of $\Pi'_{\mathit{check}_{\ref{def:worldview:2}}}$
  is similar to the one for $\Pi'_{\mathit{check}_{\ref{def:worldview:1}}}$. Since
  $\calM$ is a candidate world view for guess $\Phi$, it contains, for each
  epistemic literal $\eneg \ell \in \Phi$, an answer set $M_\ell \in \calM$ such
  that $\ell$ is false in $M_\ell$. Thus, the argument for sub-program
  $\Pi'_{\mathit{check}_{\ref{def:worldview:1}}}$ can be analogously applied for
  each $\eneg \ell \in \Phi$, taking the answer set $M_\ell$ instead of $M'$.

  Finally, we need to verify that $M$ also satisfies the rules in
  $\Pi'_{\mathit{check}_{\ref{def:worldview:3}}}$. To this end, let $M$ contain
  the facts $M_{\mathit{check}_{\ref{def:worldview:3}}}$ consisting of the fact
  $\relation{sat}$, as well as the fact
  $\fullatom{v_{\mathit{check}_{\ref{def:worldview:3}}}}{a, b}$ for each $a \in
  \calA$ and $b \in \{ 0, 1 \}$. It is easy to verify that all the rules in
  $\Pi'_{\mathit{check}_{\ref{def:worldview:3}}}$ are classically satisfied.
  However, since the negative literal $\neg \relation{sat}$ appears in line 4,
  in order to verify that $M$ is indeed an answer set, we also need to look at
  minimality w.r.t.\ GL-reduct. Since line 4 is removed in the GL-reduct of
  $\Pi'_{\mathit{check}_{\ref{def:worldview:3}}}$, it may be the case that some
  subset of $M_{\mathit{check}_{\ref{def:worldview:3}}} \setminus \{
  \relation{sat} \}$ may indeed satisfy the GL-reduct. However, we will show
  that every such subset requires $\relation{sat}$ to be true via lines 5, 6, or
  7 of $\Pi'_{\mathit{check}_{\ref{def:worldview:3}}}$ (i.e., those rules
  of $\Pi'_{\mathit{check}_{\ref{def:worldview:3}}}$ with atom $\relation{sat}$ in the head),
  and can therefore not
  exist. Indeed, every subset of $M_{\mathit{check}_{\ref{def:worldview:3}}}
  \setminus \{\relation{sat}\}$ that does not encode an answer set of $\Pi^\Phi$ in
  relation $\relation{v_{\mathit{check}_{\ref{def:worldview:3}}}}$ derives
  $\relation{sat}$ via lines 5 or 6, by construction (the argument to see this
  is analogous to the one for the previous two sub-programs). It remains to show
  that all other remaining subsets (i.e.\ the subsets of
  $M_{\mathit{check}_{\ref{def:worldview:3}}} \setminus \{\relation{sat}\}$ that
  encode answer sets of $\Pi^\Phi$) also derive $\relation{sat}$. However, since
  every answer set $M' \in \calM$, by Definition~\ref{def:worldview}, has to
  satisfy precisely the condition encoded by line 7 of
  $\Pi'_{\mathit{check}_{\ref{def:worldview:3}}}$, this is easy to see. We thus
  have that $M$, as constructed above, is indeed an answer set of $\Pi'$.

  The ``only if'' direction can be seen via similar arguments to the above. By
  construction, any answer set $M$ of $\Pi'$ will encode a guess $\Phi$ for
  $\Pi$. Since any such answer set $M$, to be an answer set, must satisfy the
  three check sub-programs of $\Pi'$ in the way described above, and these three
  check sub-programs, by construction, correspond directly to the three
  conditions of Definition~\ref{def:worldview}, we have that $M$ encodes a guess
  $\Phi$ for $\Pi$ that leads to a candidate world view.
\end{proof}

As we have seen, our reduction works as intended: the ASP program $\Pi'$ derived
from the input ELP $\Pi$ has an answer set precisely when $\Pi$ has a candidate
world view. The next interesting observation is that our reduction is, in fact,
a polynomial-time reduction, as stated below.

\begin{theorem}
  Given an ELP $\Pi$, the reduction above runs in time $O(e \cdot n)$, where $n$
  is the size of $\Pi$ and $e = |\elitof{\Pi}|$, and uses predicates of arity at
  most three.
\end{theorem}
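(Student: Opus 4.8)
The plan is to bound the construction cost of each of the five sub-programs separately and then sum them, since the reduction merely scans the input ELP $\Pi$ once and emits ASP rules; its running time is therefore proportional (up to a constant factor per emitted symbol) to the size of the output program $\Pi'$ plus the size of $\Pi$. First I would dispose of the parts that do not depend on the epistemic literals. The fact base $\Pi'_\mathit{facts}$ contains one $\relation{atom}$ fact per atom, one $\relation{elit}$ fact per epistemic literal, and a constant number of $\relation{leq}$ and $\relation{or}$ facts, so it has size $O(n + e) = O(n)$, using $e \le n$. The guessing part $\Pi'_\mathit{guess}$ is a single fixed rule. For $\Pi'_{\mathit{check}_1}$ and $\Pi'_{\mathit{check}_3}$, the dominant cost comes from the per-rule conjunctions $B_\mathit{sat}^r$: each such block contains one $\relation{or}$-gadget per literal of $r$ and so has size $O(|r|)$, and together with the binding of exactly the variables it uses it stays $O(|r|)$; since $\sum_{r \in \calR} |r| = O(n)$, and the minimality check $B_\mathit{red}^\calC$ bundles $B_\mathit{val}^\calC$, $B_\mathit{ss}$, and $\bigwedge_{r} B_\mathit{sat}^r$, each of which is linear in $n$, both of these sub-programs are built in time $O(n)$, independently of $e$.

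The heart of the count is $\Pi'_{\mathit{check}_2}$, the only part whose size genuinely scales with the number of epistemic literals. It is assembled as $e$ mutually independent copies, one for each $\eneg \ell \in \elitof{\Pi}$, where each copy re-instantiates the answer-set existence test (the $H_\mathit{val}^\ell$ guess, the fixing fact for $\relation{v}_\ell$, the per-rule $B_\mathit{sat}^r$ constraints, and the subset-minimality check $B_\mathit{red}^\ell$). By the same accounting as for $\Pi'_{\mathit{check}_1}$, each individual copy has size $O(n)$, so the whole sub-program has size $O(e \cdot n)$. Summing the five contributions then yields a total output size, and hence running time, of $O(n) + O(1) + O(n) + O(e \cdot n) + O(n) = O(e \cdot n)$, as claimed.

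It remains to bound the predicate arity, which I would do by inspecting the finite list of predicate symbols occurring in $\Pi'$: the propositional atom $\relation{sat}$ has arity zero; $\relation{atom}$ and $\relation{elit}$ have arity one; $\relation{leq}$, $\relation{g}$, and every context-indexed $\relation{v}_\calC$ have arity two; and the only ternary symbol is $\relation{or}$. The single subtlety is the four-ary $\relation{or}$ atom that appears as shorthand inside $B_\mathit{sat}^r$ and in the saturation rule of $\Pi'_{\mathit{check}_3}$: it is never actually emitted, since the helper rule $\fullatom{or}{\varW, \varX, \varY, \varZ} \gets \fullatom{or}{\varW, \varX, \variable{T}}, \fullatom{or}{\variable{T}, \varY, \varZ}$ expands each such occurrence into two genuine ternary $\relation{or}$ atoms through a fresh variable. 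Hence the maximum arity of any predicate in $\Pi'$ is three.

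The main obstacle, and the only place needing real care, is the per-copy size accounting for the replicated answer-set checks: one must verify that a single such check contributes only $O(n)$ and not more, i.e.\ that the per-rule satisfaction blocks $B_\mathit{sat}^r$ sum to $O(n)$ over $\calR$ and, crucially, that the variable bindings accompanying each per-rule constraint are charged to the atoms of $r$ rather than to all of $\calA$, so that no spurious $|\calR| \cdot |\calA|$ factor arises. Once this per-copy linearity is confirmed, the factor of $e$ coming from iterating over $\elitof{\Pi}$ delivers the stated $O(e \cdot n)$ bound immediately, and the arity inspection above completes the proof.
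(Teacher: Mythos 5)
Your proof follows the same route as the paper's own (much terser) proof: a per-sub-program size count in which the $e$ independent copies of the answer-set existence check in $\Pi'_{\mathit{check}_2}$ dominate, and an arity bound obtained by observing that the four-ary $\relation{or}$ atom is never materialized but expanded into two chained ternary $\relation{or}$ atoms. The paper's proof consists of exactly these two observations, stated in two sentences: $B_\mathit{red}$ has size $O(n)$ and is used once per epistemic literal; the four-ary $\relation{or}$ is shorthand. So in spirit you have reconstructed the intended argument, with considerably more detail.

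The one substantive divergence is the point you yourself flag as the crux: the per-rule constraints $\bot \gets B_\mathit{val}^\calC(\varsX), B_\mathit{sat}^r(\varsX,\varsX,0)$. Your accounting charges each of these $O(|r|)$, on the premise that the constraint binds ``exactly the variables it uses''. That premise is not true of the reduction as literally defined: $B_\mathit{val}^\calC(\varsX)$ is by definition a conjunction over \emph{all} of $\calA$, so each such constraint has size $\Theta(|\calA| + |r|)$, and summing over the $|\calR|$ rules and the $e+2$ contexts gives $\Theta(e \cdot |\calR| \cdot |\calA|)$ --- which is not $O(e \cdot n)$ in general (take one epistemic literal and $\Theta(n)$ constant-size rules over $\Theta(n)$ atoms: the literal construction has size $\Theta(n^2)$ against a claimed bound of $O(n)$). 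The restriction you assume --- keep in $B_\mathit{val}^\calC$ only the conjuncts $\fullatom{v_\calC}{a_i, \varX_i}$ for atoms $a_i$ occurring in $r$ --- is semantics-preserving (the dropped conjuncts bind variables occurring nowhere else, and $\relation{v_\calC}$ is total on atoms in every relevant model), and it is exactly what is needed for the stated bound to hold; but it should be presented as a refinement of the construction, not as a property it already has. Note that the paper's own proof sidesteps this entirely by counting only the $B_\mathit{red}$ occurrences and ignoring the per-rule constraints, so your version is, if anything, the more honest accounting; it just needs the restricted form of $B_\mathit{val}$ made explicit.
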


\begin{proof}
  Predicates of arity at most three are used if the four-ary $\relation{or}$
  relation is not materialized as an actual relation in ASP, but viewed as a
  shorthand for two connected ternary $\relation{or}$ relations (cf.\ the
  paragraph on shorthands of our reduction).
  The reduction's runtime (and output size) can be seen to be in $O(e \cdot n)$
  by noting the fact that the construct $B_\mathit{red}$ is of size linear in
  $n$ (it precisely encodes each rule using the $\relation{or}$ predicates).
  $B_\mathit{red}$ is then used once for each epistemic literal in $\Pi$ (cf.\
  $\Pi'_{\mathit{check}_{\ref{def:worldview:2}}}$).
\end{proof}

Note that the above theorem shows that our reduction is indeed worst-case
optimal as claimed in Section~\ref{sec:introduction}: checking consistency of
non-ground, fixed-arity ASP programs is \SIGMA{P}{3}-complete, as is checking
world view existence for ELPs.

\subsection{Using the Reduction in Practice}\label{sec:reduction-practice}

As we have seen, using the construction in the previous subsection, we can solve
the consistency problem for a given ELP via a single call to an ASP solving
system. However, when trying this in practice, the performance is less than
optimal, mainly for the following reason. At several points in the construction,
large non-ground rules are used (i.e.\ where $B_\mathit{red}^\calC$ appears in a
rule body). As noted in Section~\ref{sec:preliminaries}, these rules need to be
grounded, but may contain hundreds or thousands of variables, which need to be
replaced by all possible combinations of constants; a hopeless task for ASP
grounders, as the resulting ground program is exponential in the number
of variables.

However, as noted in \cite{tplp:BichlerMW16}, such large rules can often be
decomposed into smaller, more manageable rules, using the \emph{lpopt} tool
\cite{lopstr:BichlerMW16}. This tool roughly works as follows: (1) compute a
\emph{rule graph} $G_r$ for each non-ground rule $r$, where there is a vertex
for each variable $\varV$ in $r$, and there is an edge between $\varV_1$ and
$\varV_2$, if the two variables appear together in an atom of $r$; then (2)
compute a tree decomposition of $G_r$ of minimal width; and finally, (3) in a
bottom-up manner, output a rule for each node in the tree decomposition. The
resulting rules each contain only as many variables as the treewidth of $G_r$
(plus one), and, together, are equivalent to the original rule $r$. After this
rule decomposition step, grounding now becomes much easier, since the number of
variables in each rule is reduced. Note that, since finding optimal tree
decompositions is NP-hard, \emph{lpopt} employs heuristics to find good
decompositions.

In our construction, $B_\mathit{red}^\calC$ stands for a long rule body that
effectively encodes the entire input ELP $\Pi$. Each atom $a_i$ in $\Pi$ is
represented by the two variables $X_i$ and $Y_i$. If we represent $\Pi$ as a
graph $G_\Pi$, where each atom $a_i$ is a vertex, and there is an edge between
two atoms if they appear together in a rule in $\Pi$, then this graph structure
can be found (as a minor) in the rule graph of $B_\mathit{red}^\calC$. However,
in addition, $B_\mathit{red}^\calC$ also adds a series of $\fullatom{or}{\cdot,
\cdot, \cdot}$ atoms (via $B_\mathit{ss}(\varsX, \varsY)$), that introduce
additional connections in the rule graph of $B_\mathit{red}^\calC$. These
connections may increase the treewidth substantially. In fact, even if $G_\Pi$
has a treewidth of 1, by introducing the additional connections in a bad way,
the treewidth may increase arbitrarily: imagine that $G_\Pi$ is a chain,
depicted in black in Figure~\ref{fig:chaingrid}, and imagine the
$\fullatom{or}{\cdot, \cdot, \cdot}$-chain from $B_\mathit{ss}(\varsX, \varsY)$
is inserted into $G_\Pi$, illustrated in pink. The treewidth now depends on the
chain's length (and thereby on the size of $\Pi$), and \emph{lpopt} can no
longer split the rule well.

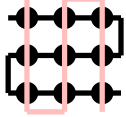
\begin{figure}[t]
  \centering
  \begin{tikzpicture}[scale=0.5]

    \def\maxX{2}
    \def\secondcolor{pink}
    \foreach \x [count=\n] in {0,...,\maxX}{
        \draw[line width=2pt] (-0.5,\x) -- (\maxX+0.5,\x);
        \foreach \y in {0,...,\maxX}{
            \node[dot] at (\x,\y) {};
        }
    }
    \draw[line width=2pt] (\maxX+0.5,\maxX) -- (\maxX+0.5,\maxX-1);
    \draw[line width=2pt] (-0.5,\maxX-1) -- (-0.5,\maxX-2);
    \foreach \x [count=\n] in {0,...,\maxX}{
        \draw[line width=2pt, color=\secondcolor] (\x,-0.5) -- (\x,\maxX+0.5);
    }
    \draw[line width=2pt, color=\secondcolor] (\maxX,\maxX+0.5) -- (\maxX-1,\maxX+0.5);
    \draw[line width=2pt, color=\secondcolor] (\maxX-1,-0.5) -- (\maxX-2,-0.5);

  \end{tikzpicture}
\vspace{-0.10cm}
  \caption{Creating a grid from chains.}
  \label{fig:chaingrid}
\vspace{-0.40cm}
\end{figure}

In the following, we will formalize the problem described above and present an
extension to our reduction presented in the previous subsection that will
alleviate the problem. First, we define the primal graph of an ELP $\Pi$, a
standard notion in topics of satisfiability, constraint programming, and logic
programming; cf.\ standard textbooks, e.g.\ \cite{book:EbbinghausF95}.

\begin{definition}
  The \emph{primal graph} of an ELP $\Pi = (\calA, \calR)$ is the graph $G_\Pi =
  (V, E)$, where $V = \calA$ and there is an edge $(a_i, a_j) \in E$ iff the
  atoms $a_i$ and $a_j$ occur together in a rule in $\calR$.
\end{definition}

Then, we define the rule graph for a non-ground ASP rule $r$:

\begin{definition}
  The \emph{rule graph} of a non-ground ASP rule $r$ is the graph $G_r = (V,
  E)$, such that $V = \varof{r}$, and there is an edge between two variables
  $\varX$ and $\varY$ in $E$ iff $\varX$ and $\varY$ occur together in an atom
  in $r$.
\end{definition}

From the construction, it is not difficult to see that any rule $r$ containing
$B_\mathit{red}^\calC$ reflects the structure of the input ELP $\Pi$, or, more
formally, the graph $G_\Pi$ is contained (as a minor) in the graph $G_r$. Thus,
by well-known graph-theoretic results, the treewidth of $G_r$ is at least the
treewidth of $G_\Pi$. Since this is an integral part of our construction, we
cannot hope for $\emph{lpopt}$ to split up rule $r$ any better than the
structure of $\Pi$ allows. However, as noted in the intuitive problem
description above, $G_r$ contains additional connections between variables.
These are introduced by the subformula $B_\mathit{ss}(\cdot, \cdot)$ that
effectively links all the variables in a rule $r$ into a chain in $G_r$. In the
worst case, as illustrated by Figure~\ref{fig:chaingrid}, these additional
connections in $G_r$ may increase the treewidth arbitrarily, making it almost
impossible for \emph{lpopt} to split up the rule well. It is therefore important
to introduce these additional connections carefully. We will now introduce a
more involved construction of $B_\mathit{ss}(\cdot, \cdot)$ that preserves the
treewidth of $G_\Pi$ in $G_r$ (i.e.\ does not arbitrarily increase it). In this
modified version, $B_\mathit{ss}(\cdot, \cdot)$ is constructed as follows:

\begin{enumerate}
  \item First, compute a tree decomposition $\calT_\Pi$ of $G_\Pi$ with minimal
    width.
	
  \item Secondly, construct $B_\mathit{ss}(\cdot, \cdot)$ in a bottom-up (i.e.\
	post-order traversal) fashion along this tree decomposition in the following
	way, for each node type. To this end, let $\calA = \{ a_1, \ldots, a_n \}$,
	and, for a node $t$ of $\calT_\Pi$, let $\chi(t)$ contain the set of atoms
	$\{ a_{i_1}, \ldots, a_{i_m} \}$, with $i_j \in \{ 1, \ldots, n \}$.

	\begin{description}
	  \item[Leaf Node $t$:] For a leaf node $t$ of $\calT_\Pi$, let
		$B_\mathit{ss}(\varsX, \varsY)$ contain the following conjunction of
		atoms: $$N^t_0 = 0, \variable{N}^t = \variable{N}^t_m,
		\bigwedge_{a_{i_j} \in \chi(t)} \fullatom{leq}{\varY_{i_j},
		\varX_{i_j}}, \fullatom{or}{\variable{N}^t_{j-1}, \varX_{i_j} \smin
		\varY_{i_j}, \variable{N}^t_j},$$ that is, $\variable{N}^t$ contains
		$1$, if the proper subset condition between $\varsX$ and $\varsY$ is
		already fulfilled in node $t$, and $0$ otherwise.

	  \item[Inner Node $t$:] For an inner node $t$ of $\calT_\Pi$ with children
		$t_1, \ldots, t_k$, let $B_\mathit{ss}(\varsX, \varsY)$ contain the same
		conjunction as for a leaf node, but where the equality atom
		$\variable{N}^t = \variable{N}^t_m$ is replaced by the following
		disjunction: $$\fullatom{or}{\variable{N}^t_m, \variable{N}^{t_1},
		\ldots, \variable{N}^{t_k}, \variable{N}^t},$$ where the $k+2$-ary
		$\relation{or}$ atom can be split into 3-ary $\relation{or}$
		atoms in the same way as with the 4-ary $\relation{or}$ atom in
		our main construction. Intuitively, we now have that
		$\variable{N}^t$ is set to $1$ if the proper subset condition is
		already fulfilled somewhere in the subtree rooted at $t$.

	  \item[Root Node $t_\mathit{root}$:] Finally, for the root node
		$t_\mathit{root}$, we add the same conjunction of atoms to
		$B_\mathit{ss}(\varsX, \varsY)$ as for an inner node, but, in addition,
		we add the final condition $\variable{N}^{t_\mathit{root}} = 1$, that
		makes sure that, at the root node, the proper subset condition is
		fulfilled.
	\end{description}
\end{enumerate}

If constructed in the way described above, it is not difficult to see that
$B_\mathit{ss}(\varsX, \varsY)$ still ensures the same condition as in our
original construction from Section~\ref{sec:reduction}, namely, that the
variables $\varsY$ identify a proper subset of the atoms identified by the
variables $\varsX$. However, the treewidth of a rule containing
$B_\mathit{ss}(\cdot, \cdot)$ is now not increased arbitrarily. In fact, it can
be verified that the treewidth of $G_\Pi$ is preserved up to a constant additive
factor, for any rule containing $B_\mathit{red}^\calC$, when using the
alternative construction for $B_\mathit{ss}(\cdot, \cdot)$ provided above. In
practice, this means that \emph{lpopt} is able to split the rule up as well as
possible; that is, as well as the structure of $\Pi$ allows.

\subsection{Discussion and Related Work}\label{sec:relatedwork}

As we have seen, the reduction proposed above allows us to solve ELPs via a
single call to an ASP solving system. However, our encoding also has several
other interesting practical properties, which make it very flexible for use
with, for example, different ASP semantics, or harder problems. A brief
discussion follows.

\paragraph{Other ASP Semantics.} Apart from the original
semantics for ASP (called \emph{stable model semantics},
\cite{iclp:GelfondL88,ngc:GelfondL91}), several different semantics have been
proposed that investigate how to interpret more advanced constructs in ASP, like
double negation, aggregates, optimization, etc \cite{amai:LifschitzTT99,%
amai:Pearce06,tplp:PelovDB07,ai:FerrarisLL11,ai:FaberPL11,ai:ShenWEFRKD14}.
Epistemic reducts may contain double negation, and we have opted to use the FLP
semantics by Faber et al.\ \shortcite{ai:FaberPL11}, as used by
Shen and Eiter \shortcite{ai:FaberPL11}, to interpret
this. The actual interpretation of double negation is encoded in the
$B_\mathit{sat}^r(\cdot, \cdot)$ shorthand defined in our reduction. This
construction is very flexible and can easily be modified to use different ASP
semantics (e.g.\ \cite{amai:LifschitzTT99}).

\paragraph{Enumeration of World Views.} Modern ASP systems like
\emph{clasp} \cite{ai:GebserKS12} contain several useful features not included
in the ASP base language. One such feature is an advanced implementation of
projection, as presented in \cite{cpaior:GebserKS09}: given a set of atoms (or
relations), the solver will output answer sets where all other atoms are
projected away, and will also guarantee that there are no repetitions (even if
multiple answer sets with the same assignment on the projected atoms exist),
while still maintaining efficiency.  This can be used to enumerate candidate
world views by projecting away all relations in our encoding, except for
$\relation{g}(\cdot)$ and
$\relation{v_{\mathit{check}_{\ref{def:worldview:1}}}}(\cdot)$. When enumerating
all projected answer sets in this way, our encoding yields all guesses together
with their candidate world views (when grouped by $\relation{g}(\cdot)$).

\paragraph{Comparison to Related Work.} Classic ELP solvers
generally work by first establishing a candidate epistemic guess $\Phi$ and then
use an answer set solver to verify that the epistemic guess indeed yields an
epistemic reduct whose answer sets form a candidate world view of the original
ELP w.r.t.\ $\Phi$. Different approaches are used to find promising epistemic
guesses, and also to verify that they lead to candidate world views, but,
generally, these systems have in common that an underlying ASP solver is used,
and called multiple times, to solve the ELP. Notable recent ELP solvers include
that follow this approach include \emph{EP-ASP} \cite{ijcai:SonLKL17},
\emph{GISolver} \cite{aspocp:ZhangWZ15} and a later, probabilistic, variant
called \emph{PelpSolver}, and \emph{ELPsolve} \cite{aspocp:KahlLS16}. A
comprehensive survey of recent ELP solving systems (including the one presented
in the present paper) can be found in \cite{corr:LeclercK18}.

We are not aware of another single-shot ELP solver that only needs to call an
underlying ASP system once. However, the idea of our approach is similar to the
one used by Bichler et al. \shortcite{tplp:BichlerMW16}, where a single-shot ASP encoding for
disjunctive ASP, which is rewritten into non-ground normal ASP with fixed arity,
is presented. That is, a solving system for normal ASP would be able to solve a
disjunctive ASP program in a single call. However, this approach was not
implemented, and only presented as an example to show how long non-ground rules
with fixed arity can be used to solve hard problems.  In order to use such
encodings (including our own presented herein), \cite{tplp:BichlerMW16} make use
of \emph{rule decomposition}, where large non-ground ASP rules are split up into
smaller parts based on tree decompositions \cite{iclp:MorakW12}. This rule
decomposition approach has been implemented as a stand-alone tool called
\emph{lpopt} \cite{lopstr:BichlerMW16}, but has recently also been integrated
into ASP solving systems like \emph{I-DLV} \cite{ia:CalimeriFPZ17}.
\section{Application: QBF Solving}
\label{sec:qbf}

In this section, we illustrate the power of ELPs by illustrating a way to solve
QBF formulas with at most three quantifier alternations (3-QBF) by encoding them
as ELPs. This provides an alternative way to show the $\SIGMA{P}{3}$\ lower
bound for ELP consistency, but relies on the existence of a reduction from
so-called \emph{restricted 3-QBF formulas}. \cite{ai:ShenE16} present such a
reduction from restricted 3-QBF formulas to ELP world view existence. Our aim is
to generalize this by presenting a reduction from (general) 3-QBF formulas to
restricted 3-QBF formulas. We will use this result to benchmark our ELP solver
presented in Section~\ref{sec:system}. Let us begin by first recalling the
definition of such 3-QBF formulas.

\begin{definition}\label{def:qbf}
  A \emph{3,$\exists$-QBF in CNF form} (or \emph{QBF}, for short) is a formula
  of the form $$\exists \varsX \forall \varsY \exists \varsZ \phi$$ where
  $\varsX$, $\varsY$, and $\varsZ$ are sets (or sequences) of distinct
  (propositional) atoms (also called variables), and $\phi = \bigwedge_{i=1}^k
  C_i$ is a CNF over the atoms $\varsX \cup \varsY \cup \varsZ$, i.e. $C_i =
  \bigvee_{j=0}^{k_i} L_{i,j}$ is a clause of size $k_i$ and $L_{i,j}$ is either
  an atom $a$ or its negation $\neg a$.
\end{definition}

W.l.o.g.\ we can assume that the clause size $k_i = 3$ for each $0 < i \leq k$,
that is, that $\phi$ is given in 3-CNF form, where each clause has at most three
elements. In \cite{ai:ShenE16}, the authors make use of a version of QBFs
called \emph{restricted QBFs}. These are QBFs that evaluate to true under all
interpretations of the existentially quantified variables if all universally
quantified variables are replaced by $\top$ (i.e.\ if they are set to true).

\begin{definition}
  A \emph{restricted QBF} is a QBF where $\phi[y/\top \mid y \in \varsY]$ is a
  tautology.
\end{definition}

The hardness proof of Theorem~5 of \cite{ai:ShenE16} is a reduction from the
validity problem of restricted QBFs to the consistency problem of epistemic
logic programs. While the actual construction of the reduction is not needed for
our purposes in this section, we nevertheless report it here, for completeness
sake.

\begin{proposition}\cite[Proof of Theorem~5]{ai:ShenE16}
\label{prop:sheneiter}
  Let $\Theta=\exists \varsX \forall \varsY \exists \varsZ \phi$ be a restricted QBF.
  Then, there exists an ELP $\Pi$ such that $\Pi$ has a candidate world view iff
  $\Theta$ is satisfiable.
\end{proposition}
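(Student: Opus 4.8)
The plan is to realize the three quantifier blocks of $\Theta = \exists \varsX \forall \varsY \exists \varsZ \phi$ by the three independent mechanisms that the semantics of Definition~\ref{def:worldview} makes available: the epistemic guess $\Phi$ plays the role of the outermost $\exists \varsX$; the requirement in Definition~\ref{def:worldview}(\ref{def:worldview:3}) that a literal hold in \emph{every} answer set of the world view plays the role of the $\forall \varsY$ block, since the answer sets of $\Pi^\Phi$ will be made to range over all truth assignments to $\varsY$; and disjunctive guessing together with GL-minimality (i.e.\ classical saturation) handles the innermost $\exists \varsZ \phi$. I would introduce, for each propositional variable $v \in \varsX \cup \varsY \cup \varsZ$, an atom for $v$ (and, where convenient, a complementary atom), together with a distinguished atom $\mathit{sat}$ that is to be derived exactly when $\phi$ is satisfied under the current assignment.

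Concretely, for the outer block I would add, for each $x \in \varsX$, a pair of epistemic literals and rules arranged so that whether the corresponding epistemic literal lies in $\Phi$ fixes the truth value of $x$ uniformly across all answer sets of $\Pi^\Phi$; a guess $\Phi$ then encodes exactly an assignment to $\varsX$. For the $\varsY$ and $\varsZ$ blocks I would add disjunctive choice rules $y \vee \bar y \gets$ and $z \vee \bar z \gets$ so that the answer sets of $\Pi^\Phi$ enumerate all combined assignments to $\varsY$ and $\varsZ$, and then a saturation gadget that derives $\mathit{sat}$ whenever $\phi$ holds and, from $\mathit{sat}$, derives every guessed atom. The key is to keep one epistemic literal \emph{out} of $\Phi$ so that Condition~\ref{def:worldview:3} forces $\mathit{sat}$ (equivalently, satisfaction of $\phi$) in every answer set; by the standard saturation argument this is equivalent to asserting $\forall \varsY \exists \varsZ \phi$ for the $\varsX$-assignment fixed by $\Phi$.

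With the construction in place I would prove the two directions separately. For the ``if'' direction, starting from an $\varsX$-assignment witnessing validity of $\Theta$, I would choose $\Phi$ to encode that assignment and verify the three conditions of Definition~\ref{def:worldview}: non-emptiness (Condition~\ref{def:worldview:1}), the existence condition (Condition~\ref{def:worldview:2}) for every $\eneg \ell \in \Phi$, and the universal condition (Condition~\ref{def:worldview:3}). For the ``only if'' direction I would take any candidate world view $\calM = \answersets{\Pi^\Phi}$, read off the $\varsX$-assignment encoded by $\Phi$, and use Condition~\ref{def:worldview:3} together with the saturation gadget to conclude $\forall \varsY \exists \varsZ \phi$, and hence validity of $\Theta$.

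The main obstacle I expect is to make Conditions~\ref{def:worldview:1} and~\ref{def:worldview:2} innocuous, so that the genuine computational content is carried solely by Condition~\ref{def:worldview:3} and the quantifier nesting comes out as exactly $\exists \varsX \forall \varsY \exists \varsZ$. This is precisely where the \emph{restricted} hypothesis enters: since $\phi[y/\top \mid y \in \varsY]$ is a tautology, the assignment setting all of $\varsY$ to true always satisfies $\phi$, so a saturated answer set is guaranteed for every guess; this secures non-emptiness of $\calM$ and lets the existence condition for the $\varsX$-related epistemic literals always be witnessed, independently of whether $\Theta$ is valid. The remaining delicate point is to confirm that the GL-minimality test admits no spurious, non-saturated answer sets that would corrupt the correspondence; this is handled by the standard property of saturation encodings that a non-saturated model fails minimality exactly when $\phi$ is violated.
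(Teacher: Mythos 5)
Your high-level plan starts from the right idea (the epistemic guess realizes $\exists\varsX$, Condition~\ref{def:worldview:3} realizes $\forall\varsY$, and minimality/saturation realizes $\exists\varsZ\,\phi$), but the concrete gadget you describe cannot produce this quantifier alternation. You guess both $\varsY$ and $\varsZ$ by disjunctive rules, derive $\mathit{sat}$ when $\phi$ holds, and saturate \emph{every} guessed atom from $\mathit{sat}$. Work out what the answer sets of $\Pi^\Phi$ then are: any combined $(\varsY,\varsZ)$-assignment falsifying $\phi$ is a model containing no $\mathit{sat}$ and is minimal, hence an answer set; assignments satisfying $\phi$ are not models at all (they trigger the saturation rules), and the fully saturated set is an answer set only if no falsifying assignment exists. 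Consequently, requiring $\mathit{sat}$ in every answer set via Condition~\ref{def:worldview:3} is equivalent to $\forall\varsY\forall\varsZ\,\phi$ --- the inner existential has disappeared, and your program decides $\exists\varsX\forall\varsY\forall\varsZ\,\phi$. Flipping the polarity (deriving a violation atom $U$ and demanding it be absent from all answer sets) does not help while both blocks are saturated: that variant checks $\exists\varsY\exists\varsZ\,\phi$. No gadget that places $\varsY$ and $\varsZ$ in the same saturation scope can express $\forall\varsY\exists\varsZ$, because GL-minimality quantifies uniformly, with a single polarity, over all atoms reachable from the saturating atom.

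The construction the paper reports (due to Shen and Eiter) separates the two blocks, and this separation is exactly the missing idea. The $\varsY$ atoms are chosen by default-negation pairs $\varY \gets \neg\overline{\varY}$, $\overline{\varY} \gets \neg\varY$, so each answer set fixes one $\varsY$-assignment which is frozen in the GL-reduct and hence lies \emph{outside} the minimality check; the universal over $\varsY$ comes from Condition~\ref{def:worldview:3} ranging over all answer sets. Only the $\varsZ$ atoms are guessed disjunctively ($\varZ \vee \overline{\varZ}$) and saturated, and the trigger is a \emph{violation} atom $U$, derived when some clause is falsified. Per $\varsY$-branch, the saturated model (which contains $U$) is an answer set iff no $\varsZ$-assignment satisfies $\phi$; therefore demanding $U$ false in all answer sets --- which the rule $V \gets \eneg V, \eneg \neg U$ forces through Condition~\ref{def:worldview:3}, since any other treatment of $\eneg V$ and $\eneg\neg U$ kills Condition~\ref{def:worldview:1} or~\ref{def:worldview:2} --- says exactly $\forall\varsY\exists\varsZ\,\phi$. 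Your reading of where restrictedness enters is reasonable in spirit (it makes Conditions~\ref{def:worldview:1} and~\ref{def:worldview:2} unproblematic; e.g.\ the all-true $\varsY$-branch always yields violation-free answer sets), but that point is secondary until the alternation itself is encoded correctly.
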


\begin{proof}
  The ELP $\Pi$ consists of the following rules:
  \begin{multicols}{2}
  \begin{itemize}
    \item For each variable $\varX \in \varsX$: \[ \varX \gets \eneg
      \overline{\varX}, \] \[ \overline{\varX} \gets \eneg \varX. \]
    \item For each variable $\varY \in \varsY$: \[ \varY \gets \neg
      \overline{\varY}, \] \[ \overline{\varY} \gets \neg \varY. \]
    \item For each variable $\varZ \in \varsZ$: \[ \varZ \vee \overline{\varZ}.
      \]
    \item For each clause $C_i$, $0 < i \leq k$: \[ U \gets L_{i,1}^*,
      L_{i,2}^*, L_{i,3}^*, \] where $^*$ is an operator that converts a
      positive literal $\varW$ into $\overline{\varW}$, and a negative literal
      $\neg \varW$ into $\varW$.
    \item For each $\varZ \in \varsZ$: \[ \varZ \gets U, \] \[\overline{\varZ}
      \gets U. \]
    \item And, finally, the rule \[ V \gets \eneg V, \eneg \neg U. \]
  \end{itemize}
  \end{multicols}
  ELP $\Pi$ has a (candidate) world view iff $\exists \varsX \forall \varsY
  \exists \varsZ \, \phi$ is satisfiable \cite{ai:ShenE16}.
\end{proof}

We now show a more general reduction that also works for the non-restricted
case. To this end, we will combine the \cite{ai:ShenE16} reduction with our own
reduction of QBF formulas to restricted QBF formulas. %
To achieve our
goal, we are going to introduce one new atom $v_i$ in each clause $C_i$ and
$\forall$-quantify these new atoms together with the $\varsY$ atoms.

\begin{definition}
  Given a QBF $\Theta = \exists \varsX \forall \varsY \exists \varsZ \phi$ with
  $\phi$ being constructed as in Definition \ref{def:qbf}, let its
  \emph{extension}, denoted $\Theta^\uparrow$, be the QBF $$\Theta^\uparrow =
  \exists \varsX \forall (\varsY \cup \variables{V}) \exists \varsZ \phi'$$
  where $$\phi' = \bigwedge_{i=1}^k \left( v_i \lor \bigvee_{j=0}^{k_i}{L_{i,j}}
  \right)$$ and $\variables{V} = \{v_1, \ldots, v_k\}$ is a list of fresh atoms.
\end{definition}

It is easy to see that any extension $\Theta^\uparrow$ of a QBF $\Theta$ is a
restricted QBF.

\begin{proposition}
  Let $\Theta$ be a QBF. Its extension $\Theta^\uparrow$ is a restricted QBF.
\end{proposition}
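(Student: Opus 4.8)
The plan is to show that $\Theta^\uparrow$ satisfies the defining condition of a restricted QBF, namely that $\phi'[y/\top \mid y \in \varsY \cup \variables{V}]$ is a tautology. The key observation is that the fresh atoms $v_1, \ldots, v_k$ have been added precisely so that each clause of $\phi'$ contains one of them, and all the $v_i$ are $\forall$-quantified.

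First I would recall the structure of $\phi'$. By definition, $\phi' = \bigwedge_{i=1}^k \left( v_i \lor \bigvee_{j=0}^{k_i} L_{i,j} \right)$, so clause $i$ of $\phi'$ is the original clause $C_i$ of $\phi$ with the fresh atom $v_i$ disjoined to it. The universally quantified block of $\Theta^\uparrow$ is $\varsY \cup \variables{V}$, which in particular contains every $v_i$.

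Next I would carry out the substitution. When we replace every universally quantified variable by $\top$ (set it to true), each $v_i \in \variables{V}$ in particular is set to $\top$. Since the $i$-th clause of $\phi'$ contains the disjunct $v_i$, that clause evaluates to $\top \lor (\cdots) = \top$ regardless of how the remaining variables (those in $\varsX$ and $\varsZ$, and the $\varsY$-atoms appearing inside $C_i$, which are also set to $\top$) are interpreted. Hence every conjunct of $\phi'[y/\top \mid y \in \varsY \cup \variables{V}]$ is identically true, so the whole conjunction is a tautology. This is exactly the condition required by the definition of a restricted QBF, so $\Theta^\uparrow$ is restricted.

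There is essentially no hard part here: the construction of the extension was engineered to make this property immediate, so the proof is a direct verification rather than an argument with a genuine obstacle. The only point requiring minor care is to be explicit that it is the membership $v_i \in \variables{V} \subseteq (\varsY \cup \variables{V})$ that forces each $v_i$ to be set to $\top$ under the substitution, and that a single true disjunct suffices to satisfy its clause; once that is noted, the tautology claim follows for every clause simultaneously.
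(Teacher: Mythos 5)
Your proof is correct and matches the paper's intent: the paper states this proposition without proof, remarking only that it is ``easy to see,'' and your direct verification---each clause of $\phi'$ contains the universally quantified fresh atom $v_i$, so substituting $\top$ for all universal variables makes every clause, and hence the conjunction, identically true---is exactly the intended immediate argument.
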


We will now show that validity-equivalence between a QBF and its extension is
preserved. For the proof, we establish the following terminology: given a subset
of atoms $\sigma \subseteq \variables{S}$, we define its \emph{out-set}
$\overline{\sigma} = \{\neg a \mid a \in \variables{S} \setminus \sigma\}$ and
its \emph{literal-set} $\widehat{\sigma} = \sigma \cup \overline{\sigma}$.

\begin{proposition}\label{prop:validityequivalence}
  $\Theta$ and $\Theta^\uparrow$ are validity-equivalent.
\end{proposition}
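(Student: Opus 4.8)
The plan is to fix the leading existential block and reduce the claim to an equivalence of the two inner (universally quantified) matrices for each fixed assignment to $\varsX$. Since $\Theta$ and $\Theta^\uparrow$ share the same prefix $\exists \varsX$, it suffices to show, for every assignment $\alpha$ to $\varsX$, that $\forall \varsY \exists \varsZ\, \phi[\alpha]$ holds if and only if $\forall (\varsY \cup \variables{V}) \exists \varsZ\, \phi'[\alpha]$ holds. I would prove the two directions separately, relying on one structural observation that drives the whole argument: the fresh atoms $v_i$ act as per-clause ``switches'', so that fixing the value of $\variables{V}$ either leaves a clause $C_i$ intact (when $v_i = 0$) or satisfies the clause $v_i \lor C_i$ trivially (when $v_i = 1$). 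In particular, setting every $v_i = 0$ turns $\phi'$ back into $\phi$, whereas turning some $v_i = 1$ can only remove constraints from $\phi'$.

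For the direction from $\Theta^\uparrow$ to $\Theta$, I would fix $\alpha$ and an arbitrary assignment $\beta$ to $\varsY$, and instantiate the universal block of $\Theta^\uparrow$ with $\beta$ together with the all-zero assignment to $\variables{V}$. Under this choice $\phi'$ collapses to $\phi$, so the $\varsZ$-witness supplied by $\forall (\varsY \cup \variables{V}) \exists \varsZ\, \phi'$ is already a witness for $\exists \varsZ\, \phi[\alpha, \beta]$. Since $\beta$ was arbitrary, $\forall \varsY \exists \varsZ\, \phi[\alpha]$ follows.

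For the converse, I would fix $\alpha$, an arbitrary $\beta$ for $\varsY$, and an arbitrary $\gamma$ for $\variables{V}$. By hypothesis there is an assignment $\delta$ to $\varsZ$ satisfying every clause $C_i$ of $\phi[\alpha, \beta]$. This same $\delta$ satisfies $\phi'[\alpha, \beta, \gamma]$: each clause $v_i \lor C_i$ with $v_i = 1$ is satisfied by its switch, and each clause with $v_i = 0$ reduces to $C_i$, which $\delta$ already satisfies. Thus a single $\varsZ$-witness works uniformly across all $\variables{V}$-assignments, yielding $\forall (\varsY \cup \variables{V}) \exists \varsZ\, \phi'[\alpha]$. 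Combining the two inclusions gives the inner equivalence, and hence validity-equivalence of $\Theta$ and $\Theta^\uparrow$.

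I do not expect a genuine obstacle here, since the construction is tailored so that the $\variables{V}$ atoms relax rather than strengthen the matrix. The only point that requires care is the bookkeeping of the quantifier prefix: ensuring that the same $\varsX$-witness is reused on both sides, and that the universal instantiations are handled correctly in each direction (the all-zero $\variables{V}$-assignment being the decisive, ``hardest'' case when passing from $\Theta^\uparrow$ to $\Theta$, and a uniform $\varsZ$-witness sufficing in the other direction). Once the switch interpretation of $\variables{V}$ is made explicit, both inclusions are immediate.
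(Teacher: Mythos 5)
Your proposal is correct and follows essentially the same route as the paper's proof: one direction reuses the $\varsZ$-witness uniformly across all $\variables{V}$-assignments (the paper phrases this as monotonicity of non-empty intersection with the enlarged clauses $C_i \cup \{v_i\}$), and the other direction instantiates $\variables{V}$ with the all-false assignment so that $\phi'$ collapses to $\phi$ (the paper's choice $\sigma_V = \emptyset$). The only difference is presentational—you argue with assignments and the ``switch'' reading of the $v_i$, while the paper uses its literal-set/intersection notation—so there is nothing substantive to add.
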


\begin{proof}
  ($\Rightarrow$) Assume $\Theta$ is valid, i.e.\ there exists an interpretation
  $\sigma_X \subseteq \varsX$, such that for any interpretation $\sigma_Y
  \subseteq \varsY$ there exists an interpretation $\sigma_Z \subseteq \varsZ$
  such that $(\widehat{\sigma_X} \cup \widehat{\sigma_Y} \cup
  \widehat{\sigma_Z}) \cap C_i \neq \emptyset$ for all $i \in \{1, \ldots, k\}$.
  By monotonicity of non-emptiness of set intersections, also
  $(\widehat{\sigma_X} \cup \widehat{\sigma_Y} \cup \widehat{\sigma_V} \cup
  \widehat{\sigma_Z}) \cap (C_i \cup \{v_i\})\neq \emptyset$ for any
  interpretation $\sigma_V \subseteq \variables{V}$ of a list of new atoms
  $\variables{V} = \{ v_1, \ldots, v_k \}$. But this is proof of the validity of
  $\Theta^\uparrow$.

  ($\Leftarrow$) For the other direction, assume $\Theta^\uparrow$ is valid, i.e.\
  there exists an interpretation $\sigma_X \subseteq \varsX$ such that for any
  interpretations $\sigma_Y \subseteq \varsY$ and $\sigma_V \subseteq
  \variables{V}$ there exists an interpretation $\sigma_Z \subseteq \varsZ$ such
  that $(\widehat{\sigma_X} \cup \widehat{\sigma_Y} \cup \widehat{\sigma_V} \cup
  \widehat{\sigma_Z}) \cap (C_i \cup \{v_i\}) \neq \emptyset$ for all $i \in
  \{1, \ldots, k\}$. By setting $\sigma_V = \emptyset$, we especially get that
  there exists an interpretation $\sigma_X \subseteq \varsX$ such that for any
  interpretation $\sigma_Y \subseteq \varsY$ there exists an interpretation
  $\sigma_Z \subseteq \varsZ$ such that $(\widehat{\sigma_X} \cup
  \widehat{\sigma_Y} \cup \{\neg v_1, \ldots, \neg v_k\} \cup
  \widehat{\sigma_Z}) \cap (C_i \cup \{v_i\}) \neq \emptyset$ for all $i \in
  \{1, \ldots, k\}$. Since the only literals containing a $v_i$ variable on the
  left-hand side of the $\cap$ are negative and the only ones on the right-hand
  side are positive, we get $(\widehat{\sigma_X} \cup \widehat{\sigma_Y} \cup
  \widehat{\sigma_Z}) \cap C_i \neq \emptyset$ for all $i \in \{1, \ldots, k\}$,
  which establishes validity of $\Theta$.
\end{proof}

Now it is straightforward to generalize the reduction from \cite{ai:ShenE16}:
let $\Theta$ be a QBF and apply the reduction from \cite{ai:ShenE16} to the
restricted QBF $\Theta^\uparrow$.

\begin{theorem}
  Let $\Theta$ be a QBF. Let the ELP $\Pi_\Theta$ be obtained by applying the
  reduction by Shen and Eiter \shortcite{ai:ShenE16} to the restricted QBF $\Theta^\uparrow$. It
  holds that $\Theta$ is valid iff $\Pi_\Theta$ is consistent, that is,
  $\Pi_\Theta$ has at least one candidate world view.
\end{theorem}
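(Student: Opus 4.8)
The plan is to obtain the theorem by composing the two equivalences already established, since all of the substantive work has been carried out in the preceding propositions. First I would observe that the claimed biconditional factors through the extension $\Theta^\uparrow$ as a chain: $\Theta$ is valid iff $\Theta^\uparrow$ is valid, and $\Theta^\uparrow$ is valid iff $\Pi_\Theta$ is consistent. Establishing these two links separately and composing them yields the result.

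For the first link I would simply invoke Proposition~\ref{prop:validityequivalence}, which states exactly that $\Theta$ and $\Theta^\uparrow$ are validity-equivalent. For the second link I would first record that $\Theta^\uparrow$ is a restricted QBF, which is precisely the content of the proposition asserting that every extension is restricted, and which is exactly the hypothesis needed to apply the Shen--Eiter construction of Proposition~\ref{prop:sheneiter}. Since $\Pi_\Theta$ is, by definition, the ELP produced by that construction applied to $\Theta^\uparrow$, Proposition~\ref{prop:sheneiter} gives that $\Pi_\Theta$ has a candidate world view iff $\Theta^\uparrow$ is satisfiable; as $\Theta^\uparrow$ is a closed (fully quantified) QBF, satisfiability and validity coincide, so this is the same as saying that $\Theta^\uparrow$ is valid. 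Chaining the two links then produces ``$\Theta$ valid iff $\Pi_\Theta$ consistent'', which is the statement of the theorem.

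There is no genuinely difficult step remaining here; the real obstacle was already dispatched in the proof of Proposition~\ref{prop:validityequivalence}, where the introduction of the fresh atoms $\variables{V}$ was shown not to disturb validity. The one point I would make explicit, and which warrants a sentence of care, is that applying Proposition~\ref{prop:sheneiter} to $\Theta^\uparrow$ is legitimate even though the extension enlarges each clause by the fresh literal $v_i$: the Shen--Eiter clause rule $U \gets L_{i,1}^*, L_{i,2}^*, L_{i,3}^*$ is to be read as listing the $*$-transform of \emph{every} literal of the clause, so it adapts verbatim to the extra literal by simply including $\overline{v_i}$ in the body (firing exactly when the extended clause is falsified). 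Since Proposition~\ref{prop:sheneiter} treats clauses uniformly and only requires its input to be a restricted QBF of the shape $\exists \varsX \forall \varsY \exists \varsZ \phi$ --- a form that $\Theta^\uparrow = \exists \varsX \forall (\varsY \cup \variables{V}) \exists \varsZ \phi'$ indeed has --- the construction and its correctness argument carry over without modification.
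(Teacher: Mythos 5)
Your proof is correct and takes essentially the same approach as the paper, whose entire proof is the single remark that the theorem follows immediately from Proposition~\ref{prop:sheneiter} and Proposition~\ref{prop:validityequivalence}. Your extra points of care --- certifying that $\Theta^\uparrow$ is a restricted QBF before invoking Proposition~\ref{prop:sheneiter}, noting that satisfiability and validity coincide for closed QBFs, and observing that the Shen--Eiter clause rule adapts verbatim to the four-literal clauses of $\Theta^\uparrow$ --- are all left implicit in the paper, so your write-up is if anything more complete than the original.
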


Correctness of this theorem follows from immediately
from 
Proposition~\ref{prop:sheneiter} and
Proposition~\ref{prop:validityequivalence}. 
\newcommand{\selp}{\emph{selp}}
\section{The \selp{} System}\label{sec:system}

\newcommand{\easptoasp}{\mbox{\emph{easp2asp.py}}}
\newcommand{\groupWorldViews}{\mbox{\emph{groupWorldViews.py}}}
\newcommand{\easpGrounder}{\mbox{\emph{easpGrounder.py}}}

We implemented the reduction in Section~\ref{sec:encoding} as part of the single shot
ELP solving toolbox \selp, available at \linkselp. In addition, the toolbox
features a grounder for ELPs and a grouping script which groups answer sets
of the reduction into candidate world views (allowing for enumeration).
The tools are implemented in \emph{python} and depend on the parser generator
\emph{LARK}\footnote{Available here: \url{https://github.com/erezsh/lark}}, 
the rule decomposition tool \emph{lpopt} \cite{lopstr:BichlerMW16},
the tree decomposition tool \emph{htd\_main} \cite{cpaior:AbseherMW17}, 
and the plain ASP grounder \emph{gringo} \cite{lpnmr:GebserKKS11}.

\paragraph{Input Formats.} The \selp{} solver reads the
\emph{EASP-not} file format, which is a restriction of the ASP input language of
\emph{gringo} to plain
ground logic programs as defined in Section~\ref{sec:preliminaries}, extended
with the \texttt{\$not\$} operator for epistemic
negation. This allows us to encode ELPs as defined in
Section~\ref{sec:preliminaries}.
\selp{} also supports
\emph{EASP-KM}, defined by adding the operators \texttt{K\$} and \texttt{M\$} instead of \texttt{\$not\$}.
By allowing variables in body elements, both formats also have a
\emph{non-ground} version.
The toolbox offers
scripts to translate between 
the two formats.

\paragraph{Toolbox.}
We briefly present the main building blocks of \selp{}.

\smallskip\textbf{\easpGrounder}
takes as input a \emph{non-ground EASP-not} program and outputs its equivalent
ground form by rewriting it into an ASP program that the \emph{gringo} grounder can
understand and ground. This is done by encoding epistemic negation as predicate
names and, after grounding, re-introducing epistemic negation where a
placeholder predicate appears.
Our grounding component, \easpGrounder{}, supports arithmetics and the
\emph{sorts} format \cite{logcom:KahlWBGZ15} as input.

\smallskip\textbf{\easptoasp} is \selp's key component. It takes a ground
\emph{EASP-not} program, performs the reduction given in
Section~\ref{sec:reduction} (with some modifications to account for the extended
language of ASP used by today's ASP systems and some straightforward
optimizations), also adhering to the practical considerations presented in
Section~\ref{sec:reduction-practice}, and it finally outputs the resulting
non-ground logic program in the syntax of \emph{gringo}. Optionally, additional
\emph{clasp} directives are generated to allow for enumeration; cf.\ Section
\ref{sec:relatedwork}. For concrete implementation details, please consult the
freely available source code at \linkselp.

\smallskip\textbf{\groupWorldViews}
takes \emph{clasp}'s output in JSON format, groups the answer sets into
candidate world views according to their $\relation{g}(\cdot)$ atoms, and
outputs them in a human-readable format.

\paragraph{Usage.} As a typical use case, suppose the file
\texttt{problem.easp} contains a non-ground ELP encoding of a problem of
interest and the file \texttt{instance.easp} contains a problem instance.
In order to output all candidate world views, one would use
the following command
(flags \verb!-pas! and \verb!--project! enable projection of answer sets onto
relevant predicates only. \verb!-n0! tells \emph{clasp} to compute all answer
sets, and \verb!--outf=2! to print in JSON format. 
\emph{lpopt} is used to decompose long
rule bodies. The \verb!--sat-prepro=3! flag is recommended by \emph{lpopt}):

\bigskip\noindent
\verb!cat problem.easp instance.easp |!\\
\verb!easpGrounder.py -sELP | easp2asp.py -pas |!\\
\verb!lpopt | gringo | clasp -n0 --outf=2 --project --sat-prepro=3 |!\\
\verb!groupWorldViews.py!
\section{Experimental Evaluation}\label{sec:experiments}

We tested our system \selp{} against the state-of-the-art ELP solver,
\emph{EP-ASP}~\cite{ijcai:SonLKL17}, using three test sets.  For every test set,
we measured the time it took to solve the consistency problem. For \selp{},
\emph{clasp} was stopped after finding the first answer set. For \emph{EP-ASP},
search was terminated after finding the first candidate world view\footnote{Note
that to have a fair comparison we disabled the subset-maximality
check on the guess that \emph{EP-ASP} performs by default.}. Note that a single
answer set of the \selp{} system is enough to establish consistency of an input
ELP. \emph{EP-ASP} needs to compute a full candidate world view to be able to
prove consistency.

Experiments were run on a 2.1GHz AMD Opteron 6272 system with 224
GB of memory. Each process was assigned a
maximum of 14 GB of RAM. For \emph{EP-ASP}, we used the required \emph{clingo}
4.5.3, since newer versions are incompatible with the solver. For \selp{}, we used \emph{clingo} 5.2.2, \emph{htd\_main} 1.2.0, and \emph{lpopt} 2.2.  
The time it took \emph{EP-ASP} to rewrite the input to its own internal
format was not measured. \emph{EP-ASP} was called with the preprocessing option
for brave and cautious consequences on, since it always ran faster 
this way. The \selp{} time is the sum of running times of its components.

\begin{figure}[t]
\begin{subfigure}{0.45\linewidth}
    \begin{tikzpicture}[scale=0.5]
    \begin{semilogyaxis}[
        width=2\linewidth,
        grid=none,
        xmin=1,
        xmax=25,
        ymin=0.01,
        ymax=28800,
        xlabel={number of students},
        ylabel={time [sec]},
        legend pos=north west
        ]
        \addplot[scatter/classes={
            selp={mark=x,blue},
            EP-ASP={mark=o,red}
            },scatter,only marks,scatter src=explicit symbolic]
        table[x=num_students,y=time,meta=solver]
            {plotdata/eligibleK.dat}; 
        \legend{$\selp{}$,\emph{EP-ASP}}
    \end{semilogyaxis}
    \end{tikzpicture}
    \caption{Scholarship Eligibility}
    \label{fig:eligibility}
\end{subfigure}
\hspace*{\fill} %
\begin{subfigure}{0.45\linewidth}
    \begin{tikzpicture}[scale=0.5]
    \begin{semilogyaxis}[
        width=2\linewidth,
        grid=none,
        xmin=1,
        xmax=25,
        ymin=0.1,
        ymax=1800,
        xlabel={plan length},
        ylabel={time [sec]},
        legend pos=north east
        ]
        \addplot[scatter/classes={
            selp={mark=x,blue},
            EP-ASP={mark=o,red}
            },scatter,only marks,scatter src=explicit symbolic]
        table[x=plan_length,y=time,meta=solver]
            {plotdata/yale.dat}; 
        \legend{$\selp{}$,\emph{EP-ASP}}
    \end{semilogyaxis}
    \end{tikzpicture}
    \caption{Yale Shooting}
    \label{fig:yale}
\end{subfigure}
\vspace{-0.16cm}
\caption{Benchmark results. Missing points indicate timeouts.}
\label{fig:benchmarks}
\vspace{-0.14cm}
\end{figure}
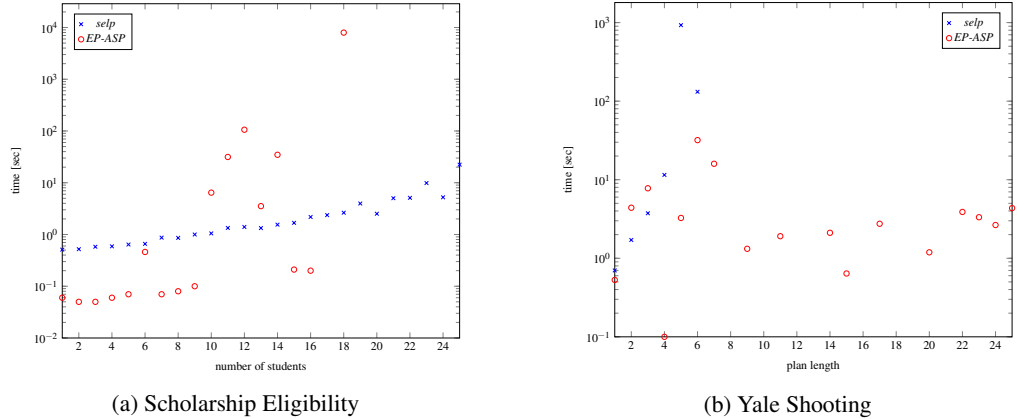

\paragraph{Benchmark Instances.}
We used three types of benchmarks, two coming from the ELP literature and one from the QSAT domain that 
contains structures of low treewidth\footnote{Benchmark archive: \linkbench}.
\subparagraph{Scholarship Eligibility (SE).} 
This set of non-ground ELP programs is shipped together with \emph{EP-ASP}.
Its instances encode the scholarship eligibility problem %
for 1 to 25
students. 

\subparagraph{Yale Shooting (YS).} 
This test set consists of 25 non-ground ELP programs 
encoding a simple version of the Yale Shooting Problem, a conformant planning
problem: the only uncertainty is whether the gun is initially loaded or not, and
the only fluents are the gun's load state and whether the turkey is alive.
Instances differ in the time horizon. We follow the ELP encoding from 
\cite{logcom:KahlWBGZ15}. 

\subparagraph{Tree QBFs (TQ).} 
The hardness proof for ELP consistency \cite{ai:ShenE16}
relies on a reduction from the validity problem for 
restricted 
quantified boolean formulas 
with three quantifier blocks (i.e.\ 3-QBFs), which can 
be generalized to arbitrary 
 3-QBFs as discussed in Section~\ref{sec:qbf}.
We apply this extended reduction to the
14
``Tree'' instances
of QBFEVAL'16
\cite{Pulina16},
available at
\url{http://
www.qbflib.org/family_detail.php?idFamily=56}, 
splitting each instance's variables into
three random quantifier blocks.

\paragraph{Results.}
The results for the first two sets are shown in Figure~\ref{fig:benchmarks}.
\selp{} solves all instances 
from (SE) %
within 30 seconds, while \emph{EP-ASP} only solves 17 within the time limit of 8 hours.
For (YS),
on the other hand,
\selp{} is able to solve only 6 instances within the time limit of 30 minutes, whereas \emph{EP-ASP} can solve 17.
Finally, for %
(TQ),
\selp{} can solve 6 of the 14 instances
within the time limit of 12 hours, 
whereas \emph{EP-ASP} was unable to solve any instances at all.
These results confirm that \selp{} is highly competitive on well-structured problems: in 
the (SE)
instances, knowledge about students is not interrelated, and hence the graph
$G_\Pi$ of the ground ELP $\Pi$ consists of one component for each student, thus
having constant treewidth. The 
(TQ) instances keep their constant treewidth thanks to the fact that both
the reduction from QBF to ELP 
(cf.\ Section~\ref{sec:qbf})
and from ELP to non-ground ASP (cf.\
Section~\ref{sec:reduction-practice}) preserve the low treewidth of the
original QBF instance.
Different from \selp{}, \emph{EP-ASP} is not designed to exploit such structural
information of ELPs and, consequently, performs worse than \selp{} in these
benchmarks.
On the other hand, (YS) contains instances of high treewidth, even though it
does not depend on the horizon. \emph{EP-ASP} is therefore able to outperform
\selp{} on such instances. A similar observation can be made for the ``Bomb in
the Toilet'' problem, as benchmarked in \cite{ijcai:SonLKL17}, which inherently
contains a huge clique structure.  \selp{} is not designed to solve such
instances, and is therefore most suited to solve ELPs of low treewidth, where it
is able to efficiently exploit the problem structure.
\section{Conclusions}\label{sec:conclusions}

In this paper, we have seen that ELPs can be encoded into ASP programs using
long non-ground rules, such that a single call to an ASP solver is sufficient to
evaluate them. A prototype ELP solver implementation, \selp, performs
particularly well on problems whose internal structure is of low treewidth. A
combined solver that either calls \selp{} or another state-of-the-art solver
based on the treewidth of the input may therefore lead to even better overall
performance.

Another topic for future work is that, under the FLP semantics, checking whether
a given atom $a$ is true in all candidate world views with a
\emph{subset-maximal} guess $\Phi$ is known to be \SIGMA{P}{4}-complete
\cite{ai:ShenE16}. To solve this problem, advanced optimization features of
state-of-the-art ASP solvers could allow us to encode this subset-maximality
condition, while leaving the core of our encoding unchanged.

Finally, an interesting question is program optimization. Recently, a practical,
easily applicable notion of strong equivalence for ELPs has been defined
\cite{aaai:FaberMW19}. It would be interesting to investigate if and how parts
of ELPs can be replaced in such a way that the solving performance of \selp{}
improves, seeing that \selp{} is sensitive to treewidth. This could lead to an
encoding technique for ELPs that tries to minimize the treewidth, similar to the
class of connection-guarded ASP \cite{ijcai:BliemMMW17}, which was recently
proposed in order to write ASP programs in such a way as to keep the treewidth
of the resulting ground program low.
 
\section*{Acknowledgements}
This work was funded by the Austrian Science Fund (FWF) under
grant numbers Y698 and P30930.


\bibliographystyle{acmtrans}
\bibliography{references}

\end{document}